\documentclass[a4paper,UKenglish,autoref,cleverref,numberwithinsect,pdfa,thm-restate]{lipics-v2021}
\setcounter{tocdepth}{2}
\bibliographystyle{plainurl}%

\usepackage[utf8]{inputenc}
\RequirePackage[T1]{fontenc}
\usepackage{amssymb}
\usepackage{etoolbox}
\usepackage{ifmtarg}
\usepackage{tikz}
\usepackage{xspace, soul, enumerate}

\usepackage{algorithm}
\usepackage[noend]{algpseudocode}
\usepackage{mathtools}
\usepackage{hyperref}

\usepackage{comment}

\newif\ifcomments
\newif\ifchanges

\theoremstyle{definition}

\newcommand{\myappendix}{appendix}

\newcommand{\Connectivity}{\myproblem{Connectivity}}
\newcommand{\SpanningForest}{\myproblem{SpanningForest}}
\newcommand{\Bipartiteness}{\myproblem{Bipartiteness}}

\newcommand{\roundup}[1]{\ensuremath{\lceil #1 \rceil}}

\newcommand{\mtext}[1]{\textsc{#1}}

\newcommand{\bigO}{\ensuremath{\mathcal{O}}}
\newcommand{\bigOt}{\ensuremath{\tilde{\mathcal{O}}}}

\newcommand{\df}{\ensuremath{\mathrel{\smash{\stackrel{\scriptscriptstyle{
    \text{def}}}{=}}}} \;}

\makeatletter 
\newcommand{\ut}[4]{
  \@ifmtarg{#4}{t^{#1}_{#2}(#3) }{t^{#1}_{#2}(#3; #4)}
}

\newcommand{\ite}[3]{
  \@ifmtarg{#1}{
    \mtext{ITE}
   }{
    \mtext{ITE}\text{$(#1,#2,#3)$}  
  }
}
\makeatother

\DeclarePairedDelimiter\size{\lvert}{\rvert}

\newcommand  {\myclass} [1]  {\ensuremath{\textsf{\upshape #1}}}

\newcommand{\StaClass}[1]{\myclass{#1}\xspace}

\newcommand{\DynClass}[1]{\myclass{Dyn#1}\xspace}

\newcommand  {\myproblem} [1] {\ensuremath{\normalfont{\textsc{#1}}}\xspace}

\newcommand{\CQ}[1][]{\StaClass{CQ}}
\newcommand{\UCQ}[1][]{\StaClass{UCQ}}
\newcommand{\CQneg}[1][]{\StaClass{CQ\ensuremath{^{\mneg}}}}
\newcommand{\UCQneg}[1][]{\StaClass{UCQ\ensuremath{^{\mneg}}}}

\newcommand{\mneg}{\neg} %

\newcommand{\DynFO}{\DynClass{FO}}

{\bfseries}{\itshape}
{\bfseries}{\itshape}

\newenvironment{proofsketch}{\begin{proof}[Proof sketch]}{\end{proof}}

\newenvironment{proofof}[1]{\begin{proof}[Proof (of #1)]}{\end{proof}}
\newenvironment{proofsketchof}[1]{\begin{proof}[Proof sketch (of #1).]}{\end{proof}}

\providecommand {\calM}      {{\mathcal M}\xspace}
\providecommand {\calN}      {{\mathcal N}\xspace}

\providecommand {\calS}      {{\mathcal S}\xspace}

\DeclareMathOperator{\polylog}{\textnormal{polylog}}

\algnewcommand\algorithmiconchange{\textbf{on change}}
\algnewcommand\algorithmiconquery{\textbf{on query}}
\algnewcommand\algorithmicupdate{\textbf{update}}
\algnewcommand\algorithmicat{\textbf{at}}
\algnewcommand\algorithmicby{\textbf{by}}
\algnewcommand\algorithmicpardo{\textbf{pardo}}
\algnewcommand\algorithmicwhere{\textbf{where}}
\algnewcommand\algorithmicwith{\textbf{with}}

\algnewcommand\algorithmicunique{\textbf{unique}}
\algnewcommand\algorithmicmin{\textbf{min}}
\algnewcommand\algorithmicmax{\textbf{max}}

\algnewcommand{\False}{\textbf{false}}
\algnewcommand{\True}{\textbf{true}}
\algnewcommand{\To}{\textbf{to}}
\algnewcommand{\Select}[4]{\State \ensuremath{#1 \gets #2(#3 \mid #4)}}
\algnewcommand{\Unique}[3]{\Select{#1}{\algorithmicunique}{#2}{#3}}
\algnewcommand{\Min}[3]{\Select{#1}{\algorithmicmin}{#2}{#3}}
\algnewcommand{\Max}[3]{\Select{#1}{\algorithmicmax}{#2}{#3}}

\algblockdefx{OnChangeAtWhereBy}{EndOnChangeAtWhereBy}%
    [4]{\algorithmiconchange\ #1\ \algorithmicupdate\ #2 \algorithmicat\ #3, \algorithmicforall\ #4, \algorithmicby:}%
    {\algorithmicend\ \algorithmiconchange}
    
\algblockdefx{UpdateAtWhereBy}{EndUpdateAtWhereBy}%
    [3]{\algorithmicupdate\ #1 \algorithmicat\ #2, \algorithmicforall\ #3, \algorithmicby:}%
    {\algorithmicend\ \algorithmiconchange}    

    \algblockdefx{UpdateAtBy}{EndUpdateAtWhereBy}%
    [2]{\algorithmicupdate\ #1 \algorithmicat\ #2, \algorithmicby:}%
    {\algorithmicend\ \algorithmiconchange}    
    
\algblockdefx{OnQuery}{EndOnQuery}%
    [1]{\algorithmiconquery\ #1:}%
    {\algorithmicend\ \algorithmiconchange}    

\algblockdefx{OnChange}{EndOnChange}%
    [1]{\algorithmiconchange\ #1}%
    {\algorithmicend\ \algorithmiconchange}

    \algblockdefx{OnChangeWith}{EndOnChangeWith}%
    [2]{\algorithmiconchange\ #1 \algorithmicwith\ #2}%
    {\algorithmicend\ \algorithmiconchange}
 
\algblockdefx{With}{EndWith}%
    {\algorithmicwith}%
    {\algorithmicend\ \algorithmiconchange}    

\algblockdefx{AtWhereBy}{EndAtWhereBy}%
    [2]{\algorithmicat\ #1\ \algorithmicwhere\ #2\ \algorithmicby:}%
    {\algorithmicend\ \algorithmiconchange}

\algblockdefx{By}{EndBy}%
    [0]{\algorithmicby:}%
    {\algorithmicend\ \algorithmiconchange}

\algblockdefx{Parfor}{EndFor}%
    [1]{\algorithmicfor\ #1\ \algorithmicpardo}%
    {\algorithmicend\ \algorithmicfor}

\algblockdefx{Parforall}{EndFor}%
    [1]{\algorithmicforall\ #1\ \algorithmicpardo}%
    {\algorithmicend\ \algorithmicfor}

\makeatletter
\ifthenelse{\equal{\ALG@noend}{t}}%
    {
    		\algtext*{EndOnChangeAtWhereBy}
        \algtext*{EndOnChange}
        \algtext*{EndOnQuery}
        \algtext*{EndAtWhereBy}
        \algtext*{EndBy}
        \algtext*{EndFor}
        \algtext*{EndWith}
        \algtext*{EndOnChangeWith}
        \algtext*{EndUpdateAtWhereBy}
    }
    {}%
\makeatother

\providecommand{\nc}{\newcommand}

\nc{\Chunks}{\myproblem{Chunks}}
\nc{\ChunkArrays}{\myproblem{ChunkArrays}}
\nc{\AggregateTrees}{\myproblem{AggregateTrees}}

\newcommand{\myoperation}[1]{\ensuremath{\normalfont{\texttt{#1}}}\xspace}
\nc{\InsertEdge}{\myoperation{InsertEdge}}
\nc{\DeleteEdge}{\myoperation{DeleteEdge}}
\nc{\NewNode}{\myoperation{NewNode}}
\nc{\DeleteNode}{\myoperation{DeleteNode}}
\nc{\Connected}{\myoperation{Connected}}
\nc{\Bipartite}{\myoperation{Bipartite}}
\nc{\NComponents}{\myoperation{$\#$Components}}
\nc{\TreeEdge}{\myoperation{TreeEdge}}

\nc{\ActivateNode}{\myoperation{ActivateNode}}
\nc{\DeactivateNode}{\myoperation{DeactivateNode}}

\nc{\TreeJoin}{\myoperation{Join}}
\nc{\TreeSplit}{\myoperation{Split}}
\nc{\TreeHeight}{\myoperation{Height}}
\nc{\TreeAnc}{\myoperation{Ancestor}}
\nc{\VertexFusion}{\myoperation{VertexFusion}}
\nc{\VertexSplit}{\myoperation{VertexSplit}}

\nc{\Connect}{\myoperation{Connect}}
\nc{\Unconnect}{\myoperation{Unconnect}}
\nc{\BulkConnect}{\myoperation{BulkConnect}}
\nc{\TreeInsert}{\myoperation{Insert}}
\nc{\TreeDelete}{\myoperation{Delete}}

\nc{\Link}{\myoperation{Link}}
\nc{\Unlink}{\myoperation{Unlink}}
\nc{\BulkSetLinks}{\myoperation{BulkSetLinks}}
\nc{\Reorder}{\myoperation{Reorder}}
\nc{\Query}{\myoperation{Query}}
\nc{\Init}{\myoperation{Init}}
\nc{\InsertChunk}{\myoperation{InsertChunk}}
\nc{\DeleteChunk}{\myoperation{DeleteChunk}}
\nc{\Concatenate}{\myoperation{Concatenate}}
\nc{\Split}{\myoperation{Split}}

\nc{\Deactivate}{\myoperation{Deactivate}}
\nc{\SetChunk}{\myoperation{SetChunk}}

\nc{\BitSet}{\myoperation{BitSet}}
\nc{\BitReset}{\myoperation{Reset}}
\nc{\BulkSet}{\myoperation{BulkSet}}
\nc{\DualBulkSet}{\myoperation{DualBulkSet}}
\nc{\BitBulkReset}{\myoperation{BulkReset}}
\nc{\BitStringReplace}{\myoperation{Replace}}
\nc{\BorderLeaves}{\myoperation{BorderLeaves}}
\nc{\BitArray}{\myoperation{BitArray}}

\ifcomments
\newcommand{\commentbox}[1]{\noindent\framebox{\parbox{0.98\linewidth}{#1}}}

\setlength{\marginparwidth}{2.5cm}
\setlength{\marginparsep}{3pt}

\newcommand{\acomment}[2]{\ \\ \fbox{\parbox{0.98\linewidth}{{\sc #1}: #2}}}
\newcommand{\mcomment}[2]{{\color{blue}(#1)}\footnote{#1: #2}} %
\else
\newcommand{\commentbox}[1]{}
\newcommand{\mcomment}[2]{}
\newcommand{\acomment}[2]{}
\fi

\ifchanges

\setul{}{0.2mm}
\setstcolor{red}

\else

\fi

\newcommand{\longversion}[1]{}

\title{Dynamic constant time parallel graph algorithms with sub-linear work}

\author{Jonas Schmidt}{TU Dortmund University, Germany}{jonas2.schmidt@tu-dortmund.de}{}{}
\author{Thomas Schwentick}{TU Dortmund University, Germany}{thomas.schwentick@tu-dortmund.de}{}{}
\authorrunning{J. Schmidt and T. Schwentick}
\Copyright{Jonas Schmidt, and Thomas Schwentick}
\ccsdesc{Theory of computation~Dynamic graph algorithms}
\ccsdesc{Theory of computation~Parallel algorithms}

\EventEditors{J\'{e}r\^{o}me Leroux, Sylvain Lombardy, and David Peleg}
\EventNoEds{3}
\EventLongTitle{48th International Symposium on Mathematical Foundations of Computer Science (MFCS 2023)}
\EventShortTitle{MFCS 2023}
\EventAcronym{MFCS}
\EventYear{2023}
\EventDate{August 28 to September 1, 2023}
\EventLocation{Bordeaux, France}
\EventLogo{}
\SeriesVolume{272}
\ArticleNo{82}

\relatedversion{This is the full version of a paper to be presented at MFCS 2023.}

\begin{document}
\maketitle              %
\begin{abstract}
    The paper proposes dynamic parallel algorithms for connectivity and bipartiteness of undirected
    graphs that require constant time and $\bigO(n^{1/2+\epsilon})$ work
    on the CRCW PRAM model. The work of these algorithms almost matches the work of
    the $\bigO(\log n)$ time algorithm for connectivity by Kopelowitz et al.\ (2018) on
    the EREW PRAM model and the time of the sequential algorithm for
    bipartiteness by Eppstein et al.\ (1997). In particular, we show that
    the sparsification technique, which has been used in both mentioned
    papers, can in principle also be used for constant time algorithms
    in the  CRCW PRAM model, despite the logarithmic depth of
    sparsification trees.

    \keywords{Dynamic parallel algorithms, Undirected connectivity, Bipartiteness}
\end{abstract}

\section{Introduction}\label{section:introduction}
There has been a lot of research on dynamic algorithms for graph
problems.\footnote{Below we will give pointers to literature. For the
  beginning of the introduction, we try to keep the story simple.} Usually, the setting is that graphs can be changed by edge
insertions or deletions and that there are query operations that allow to
check whether the graph has certain properties. Most of this research
has been about sequential algorithms and
the goal has been to find algorithms that are as fast as possible.  Some
algorithms use randomisation, others are deterministic, sometimes the
time bounds are worst-case bounds per change or query operation and
sometimes they are amortised bounds.

There has been also some research on dynamic \emph{parallel} graph algorithms. Many of these algorithms use the
EREW PRAM model\footnote{In an EREW PRAM, parallel processors can use
  shared memory, but at each moment, each memory cell can be accessed
  by only one processor. EREW stands for
  \emph{exclusive-read/exclusive-write}. } and try to achieve logarithmic or polylogarithmic
running time, while being work-efficient or even work-optimal. That
is, the overall work of all processors should be (almost) the same as
for the best sequential algorithm.\footnote{We note that in
  our context of constant-time parallel algorithms
 work is within a constant factor of the
  number of processors.}

There is an entirely separate line of work that studied the
maintenance of graph (and other) properties in a setting that was
inspired by Database Theory. It is often called \emph{Dynamic
  Complexity} in Database Theory. In the setting of Dynamic
Complexity, dynamic algorithms are called \emph{dynamic programs} and
they are not specified in an ``algorithmic fashion'' but rather by logical
formulas. As a classical example from \cite{PatnaikI97}, to maintain reachability information between
pairs of nodes in a directed acyclic graph, a dynamic program
can use an auxiliary relation $T$ that is intended to store the
transitive closure of the graph. The program can then be specified by
two formulas that specify how the new version $T'$ of $T$ is defined
after the insertion or deletion of an edge $(u,v)$:
\begin{description}
\item[Insertion:]  $T'(x,y) \df T(x,y) \lor \big(T(x,u) \land T(v,y)
  \big)$. After inserting $(u,v)$ there is a path from $x$ to $y$ if
  such a path already existed or if there was a path from $x$ to $u$
  and from $v$ to $y$.
\item[Deletion:]   $T'(x,y) \df  T(x,y) \land \bigg(E(x,y) \lor \neg T(x,u) \lor \neg T(v,y)\ \lor$\\
    $\exists u',v' \Big((u'\not=u \lor v'\not=v) \land T(x,u') \land E(u',v') \land T(v',y) \land T(u',u) \land \neg T(v',u)\Big) \bigg)$.\\
  This formula is slightly more complicated. In the main case, the nodes $u',v'$ are
  chosen such that neither the path from $x$ to $u'$ nor the path from
  $v'$ to $y$ relies on the edge $(u,v)$. In the former case this is thanks to
  $T(u',u)$ (since if $T(x,u')$ involved $(u,v)$, the graph were not
  be acyclic) and in the latter case it is thanks to $\neg T(v',u)$.
\end{description}

As in the example, the
underlying logic is usually first-order logic, since it corresponds
to the main (theoretical) query language for relational databases, the
relational algebra, which in turn corresponds to the core of
SQL. The class of problems or queries that can be maintained in this
way is usually called \DynFO.

Dynamic Complexity has existed quite separated from the
world of dynamic algorithms, but there is a direct link that connects the
two areas: it follows from a fundamental result\footnote{Immerman's
  result is not about dynamic programs, but each formula of a dynamic
  program can be
  translated separately.} from Immerman
\cite[Theorem 1.1]{Im88a} that dynamic programs can be translated into parallel
programs that run in \emph{constant time}
on suitable versions of
CRCW PRAMs\footnote{In a CRCW PRAM more than one processor can read a
  memory cell, at the same time. Even more than one processor can
  write into the same cell, but there has to be a strategy that deals
  with conflicts. This will be explained later in the text.} with
polynomially many processors. And vice versa.

Dynamic Complexity has focussed on the question whether a graph property can be
maintained at all by first-order logic (or fragments thereof), but did
not care about the work efficiency of the parallel algorithms that are
obtained from translating the update formulas. It turns out that this
automatic translation often does not yield very efficient parallel algorithms.

As an example, the parallel dynamic
algorithm that is obtained by direct translation of  the above
formulas, has work $\bigO(n^4)$  for
deletions, since it would consist of two nested
loops for $x$ and $y$ and two more for $u$ and $v$. This  is  far from being
work-efficient.\footnote{We have to admit that this paper does not
  present a better algorithm for directed reachability. We chose that
  problem only as an example, since its formulas are relatively easy
  to understand.} The translation of the dynamic program for
Connectivity in undirected graphs from \cite{PatnaikI97} even yields a work
bound of $\bigO(n^5)$. We will show that this work bound can be improved considerably.

This paper is part of an effort to bridge the gap between Dynamic
Complexity and (parallel) Dynamic Algorithms by developing algorithms
that run in constant time on CRCW PRAMs and are as work
efficient as possible.
It presents constant-time dynamic parallel algorithms for Connectivity
and Bipartiteness in undirected graphs. In the \emph{arbitrary} CRCW
model, the algorithms require work at most
$n^{\frac{1}{2}}\polylog(n)$.
In the \emph{common} CRCW model, 
the algorithms can be instantiated, for each constant $\epsilon>0$, such that they obey a work bound of
$\bigO(n^{\frac{1}{2}+\epsilon})$, where $n$ is the number of nodes in
the graph.

The algorithm
for Connectivity follows the parallel EREW PRAM algorithm of
Kopelowitz et al.\ \cite{KopelowitzPorat+2018}, which in turn was
based on a sequential algorithm by Fredrickson \cite{Frederickson1985} and its sparsification by
Eppstein et al.\ \cite{EppsteinGalil+1997}. Thus, the work of our
algorithm almost matches the work bound $\bigO(n^{\frac{1}{2}})$ of the
parallel algorithm of \cite{KopelowitzPorat+2018} and the worst-case
runtime of \cite{EppsteinGalil+1997}. However, it does not match the
runtime of the recent breakthrough algorithm by Chuzhoy et al.\
\cite{ChuzhoyGLNPS20}.

The main technical challenge here is to make the sparsification and
the tree-like data structure of \cite{KopelowitzPorat+2018} work in
constant time, despite their use of trees of logarithmic depth.   
For sparsification, this means updating all logarithmically many nodes along the path from the changed leave to the root of a tree of logarithmic height in parallel constant time
although in classical sparsification the change in the leave is propagated from one node to the other along the path.
For handling the tree-like data structure of \cite{KopelowitzPorat+2018} in constant parallel time,
data is stored differently by switching from lists to arrays 
and it is shown (in the \myappendix) that balanced search trees ($(a,b)$-trees to be precise) of logarithmic height are maintainable in constant parallel time.
In the classical algorithm for, e.g., splitting an $(a,b)$-tree tree into two separate trees,
the tree is first split into logarithmically many smaller trees and 
then the two new trees are built by merging logarithmically many of those smaller trees back together.
Both steps are done sequentially in $\bigO(\log n)$ time by splitting one of those smaller trees at a time and merging only two of the smaller trees at a time,
but for our purpose have to be done in constant parallel time.

The algorithm for bipartiteness almost matches the runtime of the
bipartiteness algorithm of Eppstein et al.\ \cite{EppsteinGalil+1997}.
It is  based on the observation that a
graph is bipartite if and only if its distance-2 graph has twice as
many connected components as the graph itself. The algorithm therefore
basically maintains two spanning trees, for the graph and its
distance-2 graph. Here, the main
technical challenge is to show that the same sparsification approach
as for connectivity also works for bipartiteness.

\subparagraph*{Structure of the paper.} We introduce some basic concepts
about CRCW PRAMs in \autoref{section:preliminaries}. The algorithm for
connectivity is presented in \autoref{section:connectivity}. %
The algorithm for bipartiteness is
given in \autoref{section:bipartiteness}.

\subparagraph*{Related work.}
Some related work has already been mentioned above. Dynamic Complexity
has started by the work of Patnaik and Immerman
\cite{PatnaikI97} and Dong and Su \cite{DongS93}. For a
recent survey on the dynamic complexity of Reachability in directed
and undirected graphs, we refer to \cite{SchwentickVZ20}. For a recent
survey on dynamic graph algorithms, we refer to \cite{HanauerH022}.

Of course, the PRAM model is not the only parallel computation model
for parallel algorithms. Parallel dynamic algorithms for the MPC model
can be found, e.g., in \cite{ItalianoLMP19}. 

\subparagraph*{Acknowledgements.} We are grateful to Jens Keppeler and
Christopher Spinrath for
careful proof reading.

\section{Preliminaries}\label{section:preliminaries}
For natural numbers $i\le j$, we write $[i,j]$ for the set
$\{i,\ldots,j\}$.
We only deal with undirected graphs and denote an undirected edge between two vertices $u$ and $v$ by
$(u,v)$. 

\subparagraph*{Dynamic algorithmic problems.}

In this paper, we view a dynamic (algorithmic) problem basically as
the interface of a data type: that is, there is a collection of
operations by which some object can be initialised, changed, and
queried. A \emph{dynamic algorithm} is then a collection of
algorithms, one for each operation. We consider two main dynamic
problems in this paper,  \Connectivity and \Bipartiteness.

The algorithmic problem \Connectivity maintains an undirected graph $G$ and
has the following 
operations.
\begin{itemize}
\item $\Init(G,n)$ yields an initial graph $G$ with $n$ nodes, that
  are initially deactivated  but without
  edges;
\item $\ActivateNode(G,v)$  yields an identifier for a new node
  of $G$;
\item $\DeactivateNode(G,v)$ deactivates the node $v$ from $G$. The node $u$
  must be isolated;
\item $\InsertEdge(G,u,v)$ inserts edge $(u,v)$ to $G$;
\item $\DeleteEdge(G,u,v)$ deletes edge $(u,v)$ from $G$;
 \item $\Connected(G,u,v)$  returns true if $u$ and $v$ are in the
   same connected component,
   otherwise false.
 \item $\NComponents(G)$  yields the number of connected components of
   $G$ on the activated nodes.
\end{itemize}

\Bipartiteness has almost the same operations, but instead of
$\Connected$ and $\NComponents$ it has a query operation
$\Bipartite(G)$ which yields true if the graph $G$ is bipartite.

Throughout this paper we only consider the effort for change and query
operations, but disregard the effort for the initialisation of a
graph. We also note that the number $n$ of nodes can not grow. The
nodes are represented by numbers in $\{1,\ldots,n\}$.

\subparagraph*{Parallel Random Access Machines (PRAMs).}
A \emph{parallel random access machine} (PRAM) consists of a number of
processors that work in parallel and use a shared
memory.\footnote{Some content of this paragraph is copied from \cite{KeppelerSS23}.} The memory
is comprised of memory cells which can be accessed by a processor in
$\bigO(1)$ time.
Furthermore, we assume that simple arithmetic and bitwise
operations, including addition, can be done in $\bigO(1)$ time by a
processor. The work  of a
PRAM computation is the sum of the number of all computation steps of
all processors made during the computation.
We define the space $s$ required by a PRAM computation as the maximal
index of any memory cell accessed during the computation.

We use
the Concurrent-Read Concurrent-Write model (CRCW PRAM),
i.e., processors are allowed to read and write concurrently from and to
the same memory location. More precisely, we will consider two
different versions of CRCW PRAMs.
\begin{itemize}
\item In the \emph{arbitrary} model, if multiple processors
  concurrently write to the same memory location, one of them,
  ``arbitrarily'', succeeds;
\item In the slightly weaker \emph{common} model,
  concurrent write into the same memory location, is only allowed if
  all processors write the same value.
\end{itemize}
The two models will yield slightly different work bounds for our
dynamic algorithms for \Connectivity and \Bipartiteness: in the
arbitrary model, the work will be at most $\bigOt(n^{\frac{1}{2}})$,
whereas in the common model, we will have algorithms with work
$\bigO(n^{\frac{1}{2}+\epsilon})$, for every $\epsilon>0$.
Here, $\bigOt(f(n))$ allows an additional polylogarithmic factor with $f(n)$. 

We refer to \cite{DBLP:books/aw/JaJa92} for more details on PRAMs and to \cite[Section 2.2.3]{DBLP:books/el/leeuwen90/Boas90} for a discussion of alternative space measures.

For simplicity, we assume that even if the number $n$ of nodes of the input graph grows,
a number in the range $[0,n]$ can still be stored in one memory
cell. This assumption is justified, since addition of larger numbers $N$
can still be done in constant time and polylogarithmic work on a CRCW
PRAM.

The following lemma exhibits a simple CRCW PRAM algorithm in the
common model that will
be used as a sub-algorithm. It also illustrates the frequent use of
arrays in PRAM algorithms. It will mainly be used as a tie-breaker, if
one of several objects has to be chosen, and it will  therefore not be
needed in the context of the arbitrary model. The lemma was shown in a slightly more
general form in \cite[Proposition 5.4]{KeppelerSS23}. 

\begin{lemma}[{\cite[Proposition 5.4]{KeppelerSS23}}]\label{thm:fast-minimum}
    Let $A$ be an array of size $n$ over a finite alphabet $\Sigma$.
    The minimum/maximum value of $A$ can be computed in constant
    parallel time on a common CRCW PRAM with work $\bigO(n^{1+\epsilon})$ for any $\epsilon > 0$.
\end{lemma}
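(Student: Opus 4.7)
The plan is to design a constant-depth recursive algorithm. Let $k = \lceil 1/\epsilon \rceil$, so that $1/k \le \epsilon$. At the top level I partition the array $A$ into $n^{1-1/k}$ blocks of size $n^{1/k}$, compute the minimum of each block in parallel by a direct all-pairs comparison, and recurse on the resulting array of block-minima. After $k$ recursion levels the remaining array has constant size and one more all-pairs comparison yields the answer.

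The key primitive is the computation of the minimum of $g$ elements in constant time with work $\bigO(g^2)$ on a common CRCW PRAM. To this end I maintain an auxiliary ``candidate'' array $C[1..g]$, initialised to $1$ in one parallel step (work $\bigO(g)$). Then, using $g^2$ processors, one per ordered pair $(i,j)$ with $i \ne j$, the processor for $(i,j)$ checks whether $a_j < a_i$, or $a_j = a_i$ and $j < i$, and if so writes $0$ into $C[i]$. Several processors may target the same cell $C[i]$, but they all write the value $0$, so the common CRCW discipline is respected. Because of the index-based tie-breaking, exactly one entry of $C$ remains equal to $1$, namely the one corresponding to the (now unique) minimum. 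In a final step each processor reads its $C[i]$ and, if $C[i]=1$, writes $a_i$ to the designated output cell; only one processor writes there, so again there is no conflict.

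Applied level by level, at level $t \in \{1,\dots,k\}$ the input has size $n_t = n^{1-(t-1)/k}$ and is split into blocks of size $n^{1/k}$. Each block-minimum costs work $\bigO(n^{2/k})$, so the total work at level $t$ is $\bigO(n_t \cdot n^{1/k}) = \bigO(n^{1-(t-2)/k})$, which is maximised at $t=1$ and equals $\bigO(n^{1+1/k}) \subseteq \bigO(n^{1+\epsilon})$. With only $k = \bigO(1)$ levels, the total time is constant and the overall work is $\bigO(n^{1+\epsilon})$. The maximum is obtained symmetrically by reversing the direction of the comparison in the primitive.

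The main obstacle is respecting the \emph{common} write-conflict rule: a naive variant that lets the ``winner'' of each comparison write its index into an output cell would require the arbitrary model, since different pairs would try to write different winners. The ``write $0$ to the loser'' pattern together with the strict, index-broken tie-breaking—which ensures the surviving candidate is unique—is what keeps all concurrent writes consistent and thus legal in the common model.
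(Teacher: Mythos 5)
Your proposal is correct and follows essentially the same route as the paper's proof: the all-pairs comparison primitive in which every ``loser'' is marked by a concurrent write of one fixed value (legal in the common model thanks to the index-based tie-breaking), combined with a block decomposition of size roughly $n^{\epsilon}$ and a constant number $\lceil 1/\epsilon\rceil$ of recursion rounds. The only differences are cosmetic (you mark losers with $0$ in a candidate array initialised to $1$, the paper marks them with $1$ in an array initialised to $0$), and your work analysis per level matches the paper's bound of $\bigO(n^{1+\epsilon})$.
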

\begin{proofsketch}
    We only describe how the minimum can be computed, since finding
    the maximum value is completely analogous.
    A na\"{i}ve approach is to assign one processor to each pair $i,j$
    of positions in the array. Whenever $A[i]<A[j]$ or $A[i]=A[j]$ and
    $i<j$, then a 1 is
    written into $B[j]$,
    where $B$ is an auxiliary array of size $n$, in which all entries
    are initially set to zero. Afterwards, one processor is assigned to each cell of $B$
    and the processor assigned to the only cell $B[i]$ with value 0 outputs $A[i]$ as the minimum.
    However, this algorithm requires $\bigO(n^2)$ work.
    The (standard) idea to reduce the work to $\bigO(n^{1+\epsilon})$
    is to first compute the minimum of subarrays of $A$ of size
    $n^\epsilon$. This requires time
    $\bigO(n^{2\epsilon})$, for each of the $n^{1-\epsilon}$ subarrays,
    resulting in work $\bigO(n^{1+\epsilon})$. The
    minimal values  of the sub-arrays
    can then be    stored in an array of size  $n^{1-\epsilon}$ whose minimum  can be
    computed recursively. Since the number of  recursion rounds is
    bounded by the constant $\lceil\frac{1}{\epsilon}\rceil$, the
    overall work is $\bigO(n^{1+\epsilon})$.
\end{proofsketch}

\section{Connectivity}\label{section:connectivity}
In this section, we present the main result of this paper and (most
of) its proof.

\begin{theorem}\label{thm:spanning-connectivity-in-n}
    There are dynamic parallel constant-time algorithms for \Connectivity
    with the following work bounds per change or query operation.
    \begin{itemize}
        \item $\bigOt(n^{\frac{1}{2}})$ work on the \emph{arbitrary} CRCW PRAM model.
        \item $\bigO(n^{\frac{1}{2}+\epsilon})$ work on the \emph{common} CRCW PRAM model, for every $\epsilon>0$.
    \end{itemize}
\end{theorem}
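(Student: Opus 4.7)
The plan is to adapt the EREW PRAM algorithm of Kopelowitz et al.\ \cite{KopelowitzPorat+2018} for a CRCW PRAM while compressing its $\bigO(\log n)$ running time to $\bigO(1)$. That algorithm combines Frederickson's clustering of the spanning forest into clusters of size $\Theta(n^{1/2})$ with the sparsification of Eppstein et al., laid out over a balanced binary tree $T$ of height $\bigO(\log n)$: each leaf of $T$ holds a fragment of the edge set, and each internal node stores a spanning-forest certificate of the union of edge fragments in its subtree. A single change then affects only the $\bigO(\log n)$ nodes on one leaf-to-root path and the cluster data structure at the root, with $\bigOt(n^{1/2})$ work per affected node. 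Hence the total work will be $\bigOt(n^{1/2})$, as long as the whole update can be forced into constant parallel depth.

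The main obstacle is that the update naturally propagates bottom-up through $T$: the new certificate at an internal node $v$ is ordinarily recomputed from the already-updated certificates of $v$'s children. To replace this $\bigO(\log n)$ pipeline by constant depth, I would compute, in parallel at every level $\ell$, the new certificate at each ancestor $v$ directly from the \emph{pre-update} certificates of $v$'s children together with the known leaf change. This is possible because (i) the set of edges in $v$'s subtree changes by exactly one edge, and (ii) updating a spanning-forest certificate under a single edge insertion or deletion amounts to either adding the edge, removing a bridge, or searching for a replacement edge inside the stored cluster structure at $v$, each of which is a local operation. Assigning $\bigOt(n^{1/2})$ processors to each of the $\bigO(\log n)$ levels yields total work $\bigOt(n^{1/2})$ at constant depth.

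Next, I would redesign the supporting data structures so that each of the local operations above runs in constant parallel time. The $(a,b)$-trees used in \cite{KopelowitzPorat+2018} to maintain ordered cluster lists and to support \TreeJoin\ and \TreeSplit\ are classically split-and-merged sequentially along a logarithmic-height path. I would replace the pointer-linked list representation by arrays indexed by cluster rank, so that lookups and range queries become constant-time, and reformulate \TreeJoin\ and \TreeSplit\ to rebuild all $\bigO(\log n)$ levels of the $(a,b)$-tree simultaneously from the pre-operation structure, computing per-level indexing information in parallel. The detailed formulation of this constant-time $(a,b)$-tree maintenance would be deferred to the appendix, as the introduction announces.

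The remaining piece is the contrast between the two PRAM variants. On the arbitrary CRCW model, selecting a replacement edge or a canonical representative within a cluster can be done by letting many processors write to the same target location; combined with the preceding ingredients, this gives $\bigOt(n^{1/2})$ directly. On the common CRCW model, such tie-breaking must be deterministic, and here I would invoke \autoref{thm:fast-minimum} to extract the minimum-indexed candidate among the $\bigO(n^{1/2})$ options at each affected node, at a cost of $\bigO(n^{(1/2)(1+\epsilon')})$ per level and $\bigO(\log n)$ levels in total. Choosing $\epsilon'$ small enough relative to the target $\epsilon$ gives overall work $\bigO(n^{1/2+\epsilon})$ for every $\epsilon>0$, establishing both bounds in the theorem.
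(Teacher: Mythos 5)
Your overall architecture is the paper's: Frederickson-style clustering of Euler tours into $\Theta(\sqrt{m})$-size chunks, constant-time maintenance of $(a,b)$-trees via array representations deferred to an appendix, Eppstein-style sparsification to pass from $m$ to $n$, and the arbitrary/common dichotomy resolved by \autoref{thm:fast-minimum}. But one step, as you state it, would fail. You propose that each ancestor $v$ in the sparsification tree recompute its new certificate \emph{independently}, ``directly from the pre-update certificates of $v$'s children together with the known leaf change'', justified by the claim that the edge set of $v$'s subtree changes by exactly one edge. That claim is true of the subgraph $G_v$, but the certificate at $v$ is a spanning forest $F_v$ of the \emph{base graph} $B_v$, which by the sparsification invariant is the union of the children's spanning forests --- not of $G_v$ itself. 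When a tree edge $(x,y)$ is deleted, a child swaps $(x,y)$ for a replacement edge of its own, so $B_v$ changes by two edges, and the replacement edge the child picks need not lie in $v$'s pre-update base graph at all. If every level searches for its own replacement inside its own cluster structure, the levels can disagree and the invariant that $B_u$ consists of the edges of the children's forests breaks. The paper's fix is a coordination step: all nodes on the path test in parallel whether $(x,y)$ is a forest edge and compute a candidate replacement; monotonicity along the path (the paper's Observations (2) and (3)) guarantees that the affected nodes form an initial segment and that the replacement edge found at the \emph{lowest} such node is valid at every node above it, so a single edge $e$ is installed everywhere after an extra $\bigO((\log n)^2)$-work selection. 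Your sketch needs this observation, or an equivalent one, to be sound.

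A smaller omission: the paper first proves the $\bigOt(m^{\frac{1}{2}})$ bound only for graphs of maximum degree $3$ and then reduces the general case to it by replacing each high-degree vertex with a cycle (\autoref{prop:connectivity-sublinear-in-m-unbounded}). The degree bound is not cosmetic --- it is what bounds the number of edges incident to a chunk of $K$ tree edges by $\bigO(K)$, which is exactly what makes the replacement-edge scan within a chunk affordable. Your proposal never makes this reduction explicit, and without it the claimed per-node work for the local search does not follow.
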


As usual, the algorithm basically maintains a spanning forest and the
graph $G$ is connected if and only if $\NComponents(G)$  yields 1.

In fact, we will consider the data type \SpanningForest as an
extension of \Connectivity with the following additional
operation.
\begin{itemize}

    \item $\TreeEdge(G,u,v)$  returns true if $(u,v)$ is a tree edge,
        otherwise false.
\end{itemize}

The proof is along the lines of \cite{KopelowitzPorat+2018} and  is
split into the same three main steps. For each step, we need to show that
it can be done in constant parallel time on a CRCW PRAM, as opposed to
$\bigO(\log n)$ on an EREW PRAM. This strengthening comes with an
additional work factor of $\polylog(m)$ or $\polylog(n)$ on an \emph{arbitrary} CRCW PRAM and $m^\epsilon$ or $n^\epsilon$ on a \emph{common} CRCW PRAM.

We first show that, for graphs of maximum degree 3, \SpanningForest can be maintained
with work $\bigOt(m^{\frac{1}{2}})$ and $\bigO(m^{\frac{1}{2}+\epsilon})$ per operation, depending on the PRAM model.
Then we show that the case of graphs of unbounded degree can be reduced to the case of graphs with degree bound three.
Finally, we show that, with the help of sparsification, both bounds from above are translatable to be in $n$ instead of $m$.

More precisely, we show the following three results.

\begin{proposition}\label{prop:connectivity-sublinear-in-m-degree-three}
    There are dynamic parallel constant time algorithms for the special
    case of \SpanningForest, where the maximum degree of
    the graph never exceeds 3 with the following work bounds per change or
    query operation.
    \begin{itemize}
        \item $\bigOt(m^{\frac{1}{2}})$ on the \emph{arbitrary} CRCW PRAM model.
        \item $\bigO(m^{\frac{1}{2}+\epsilon})$ on the \emph{common} CRCW PRAM model, for every $\epsilon>0$.
    \end{itemize}
\end{proposition}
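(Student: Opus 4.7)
The plan is to adapt the EREW PRAM algorithm of Kopelowitz et al.~\cite{KopelowitzPorat+2018} to the CRCW setting and to replace its logarithmic-time subroutines by constant-time ones, paying only a small extra work factor. Since the graph has maximum degree three, $m = \Theta(n)$ and the spanning forest has at most $n-1$ edges, so partitioning the Euler tour of a tree into chunks of size $\Theta(\sqrt{m})$ yields $\bigO(\sqrt{m})$ chunks.

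Concretely, I would maintain each tree of the spanning forest via its Euler tour, stored in an $(a,b)$-tree $T$ of height $\bigO(\log m)$ whose leaves are grouped into those $\Theta(\sqrt{m})$ chunks. Each chunk holds, in arrays, the vertices appearing on its portion of the tour together with the non-tree edges incident to those vertices, grouped by target chunk. Each internal node of $T$ additionally stores a bit-array of length $\bigO(\sqrt{m})$ recording, for every chunk $C$, whether there exists a non-tree edge from a vertex of its subtree to a vertex of $C$. Because of the degree-three restriction, the total storage is $\bigO(m)$.

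Operations are then implemented as follows. A non-tree edge insertion or deletion affects only $\bigO(1)$ chunks and $\bigO(\log m)$ ancestors in $T$, all of which can be rebuilt in a single parallel step using $\bigO(\sqrt{m})$ work. The queries $\Connected$ and $\TreeEdge$ reduce to comparing the identifiers of the chunks' root ancestors and to a direct lookup, respectively. A tree edge insertion concatenates two Euler tours at the appropriate positions; a tree edge deletion splits an Euler tour and then requires locating a replacement non-tree edge between the two resulting pieces. For the latter, the bit-arrays at the new roots are combined in a single round using Lemma~\ref{thm:fast-minimum} (in the common model) or an arbitrary-write (in the arbitrary model) to identify a candidate chunk, and the concrete replacement edge is then found inside that chunk with $\bigO(\sqrt{m})$ work.

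The main obstacle is exactly what the introduction flags: performing split and join of $T$, and recomputing its summary bit-arrays, in constant parallel time despite $T$ having depth $\bigO(\log m)$. A sequential $(a,b)$-tree split descends to the split key and then rebalances on the way back up, costing $\Omega(\log m)$ time. I would invoke an appendix result stating that $(a,b)$-trees of logarithmic height admit constant-time parallel split and join: the path from the root to the split position determines $\bigO(\log m)$ candidate subtrees which are reassembled in parallel into the two resulting trees, and all rebalancings along the affected path are carried out in a single parallel round using the precomputed subtree sizes. Combined with storing chunk contents as arrays rather than linked lists so that intra-chunk relabellings are one parallel step, and with Lemma~\ref{thm:fast-minimum} for summary aggregation, this gives constant parallel time per operation with the claimed extra $\polylog(m)$ (arbitrary) or $m^\epsilon$ (common) work factor, proving the stated bounds.
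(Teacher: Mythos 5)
Your proposal is correct and follows essentially the same route as the paper: chunking the Euler tours into $\Theta(\sqrt{m})$ pieces, aggregating link bit-vectors by bitwise disjunction up a logarithmic-height $(a,b)$-tree, deferring constant-time parallel split/join of that tree to an appendix lemma, and using Lemma~\ref{thm:fast-minimum} (resp.\ arbitrary writes) to break ties when selecting a replacement edge. The only differences are presentational (the paper separates a master array of chunks from the chunk arrays and their aggregate trees, rather than hanging the tour directly off the $(a,b)$-tree), and do not affect correctness or the work bounds.
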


\begin{proposition}\label{prop:connectivity-sublinear-in-m-unbounded}
    If \SpanningForest can be maintained  in parallel constant time on a
    CRCW PRAM with the work bounds of \autoref{prop:connectivity-sublinear-in-m-degree-three} per change or
    query operation, for any $\epsilon > 0$, for graphs with maximum
    degree 3, it can be maintained with the same
    bounds for general graphs with the provision that they never have
    more than $cn$ edges, for some constant~$c$.
\end{proposition}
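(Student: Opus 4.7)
The natural approach is the classical \emph{vertex expansion}: maintain, alongside $G$, an auxiliary graph $G'$ of maximum degree~$3$ that simulates $G$, and invoke the algorithm of \autoref{prop:connectivity-sublinear-in-m-degree-three} on $G'$. Each active vertex $v \in V(G)$ of current degree $d_v$ is represented in $G'$ by a path of $\max(1, d_v)$ \emph{copies} joined by \emph{internal} edges, and each edge $(u,v) \in E(G)$ corresponds to one \emph{external} edge in $G'$ between a distinguished copy of $u$ and one of $v$. Since every copy has at most two internal neighbours and at most one external neighbour, $G'$ has maximum degree~$3$. Under the hypothesis $m \le cn$, we get $|V(G')| \le 2cn$ and $|E(G')| \le 3cn$, so the work bounds of \autoref{prop:connectivity-sublinear-in-m-degree-three} applied to $G'$, expressed in the edge count of $G'$, match the claimed bounds in $n$ up to constants.

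To support the translation, the plan is to maintain three auxiliary structures: (i) for each $v \in V(G)$, a doubly-linked list of its copies in $G'$ with head and tail pointers; (ii) a stack-based \emph{free pool} of unused copy identifiers drawn from $[1, 2cn]$, supporting constant-work pop and push through a single top pointer; and (iii) a lookup structure that, given an edge $(u,v) \in E(G)$, returns the pair of $G'$-copies realising it. All three admit local, constant-work updates on a CRCW PRAM.

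Each \SpanningForest operation on $G$ is then translated into a constant number of calls to the $G'$-algorithm plus constant-work bookkeeping. For $\InsertEdge(G,u,v)$ one pops two fresh identifiers $c_u, c_v$ from the free pool, appends them to the tails of the lists of $u$ and $v$ (inserting at most one new internal edge in $G'$ per endpoint), inserts the external edge $(c_u, c_v)$ into $G'$, and records the pair in the lookup structure. For $\DeleteEdge(G,u,v)$ one locates $(c_u, c_v)$, splices each of them out of its list (at most two internal-edge deletions and one insertion per endpoint to reconnect the list), deletes the external edge in $G'$, and pushes the identifiers back. $\ActivateNode$ and $\DeactivateNode$ simply allocate or release a single placeholder copy. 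Queries reduce directly: $\Connected(G,u,v)$ delegates to $\Connected(G', c, c')$ for any copies $c$ of $u$ and $c'$ of $v$; $\NComponents(G) = \NComponents(G')$ because each vertex's path is internally connected; and $\TreeEdge(G,u,v)$ is answered by the tree status of the corresponding external edge in $G'$, since internal edges of $G'$ have no counterpart in $G$ and cannot be confused with tree edges of $G$.

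The main technical obstacle is a small case analysis at the boundary values $d_v \in \{0,1,2\}$, where the topology of the list changes between ``isolated placeholder'', ``singleton'', and ``non-trivial path''; in each case only a constant number of allocations, list splices, and $G'$-operations are triggered, so constant parallel time and constant bookkeeping work are preserved. Combined with $m' = \bigO(n)$, this yields the desired $\bigOt(n^{1/2})$ work in the arbitrary model and $\bigO(n^{1/2+\epsilon})$ work in the common model per operation.
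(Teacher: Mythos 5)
Your proposal is correct and follows essentially the same route as the paper: both use the classical Frederickson-style vertex expansion into a degree-3 graph $G'$ with $\bigO(n)$ nodes, maintained via per-vertex doubly-linked lists of copies so that each operation on $G$ translates into a constant number of operations on $G'$. The only (cosmetic) difference is that you expand each vertex into a path of copies while the paper uses a cycle of nodes $n(u,v)$, which merely shifts the constant counts of edge insertions and deletions per translated operation.
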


\begin{proposition}\label{prop:sparsification-spanning-forest}
    If \SpanningForest can be maintained in parallel constant time on a
    CRCW PRAM with  the work bounds of \autoref{prop:connectivity-sublinear-in-m-degree-three} per change or
    query operation,
    then it can also be maintained with $\bigOt(n^{\frac{1}{2}})$ work per change or
    query operation
    on the common CRCW model and with $\bigO(n^{\frac{1}{2}+\epsilon})$ work per change or
    query operation on the arbitrary CRCW model.
\end{proposition}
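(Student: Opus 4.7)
The plan is to implement the classical sparsification of Eppstein et al.\ on top of a constant-time CRCW PRAM realisation of the underlying tree data structure. I maintain a balanced binary tree $\calT$ of height $\bigO(\log n)$ whose leaves partition the edges of $G$ into groups of $\bigO(n)$ edges each. With every node $v$ of $\calT$ I associate the graph $G_v$ formed by the edges stored in $v$'s subtree together with a spanning forest $F_v$ of $G_v$. The sparsification invariant is that $F_v$ is a spanning forest of $F_{\ell(v)} \cup F_{r(v)}$, where $\ell(v)$ and $r(v)$ denote the children of $v$; consequently, at every internal node the graph whose spanning forest must be (re)computed has $n$ vertices and at most $2(n-1)$ edges. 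This is precisely the regime of \autoref{prop:connectivity-sublinear-in-m-unbounded}, so a single recomputation at a single node costs constant parallel time with work $\bigOt(n^{\frac{1}{2}})$ on the arbitrary model, or $\bigO(n^{\frac{1}{2}+\delta})$ on the common model for any $\delta>0$.

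An elementary change in $G$ — an edge insertion, edge deletion, node activation, or node deactivation — touches exactly one leaf of $\calT$ and induces a recomputation of $F_v$ at every ancestor $v$ along the root-to-leaf path $\pi$ of length $\bigO(\log n)$. The main technical obstacle is that the dependency $F_v = \text{spanning forest of } F_{\ell(v)} \cup F_{r(v)}$ looks strictly sequential along $\pi$, because the update at depth $k$ wants to read the already-updated forest at depth $k+1$. My plan for breaking this sequentiality rests on the observation that each node on $\pi$ contributes only $\bigO(1)$ edge changes to the forest it hands up: an insertion can add at most one tree edge to $F_v$, and a deletion removes at most one tree edge and may introduce at most one reconnecting replacement drawn from $F_{\ell(v)} \cup F_{r(v)}$. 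I would exploit this by running the per-node subroutine of \autoref{prop:connectivity-sublinear-in-m-unbounded} at every ancestor on $\pi$ in parallel, using the constant-time $(a,b)$-tree primitives from the appendix to expose, at every depth, both the pre-update forests of the two children and a succinct constant-size description of the change bubbling up from below. Because each level's change is of constant size, the CRCW write mechanism can resolve the correct per-level $F_v'$ within a constant number of arbitration rounds rather than one round per level.

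The work bound then follows by summing the per-node cost across the $\bigO(\log n)$ ancestors of $\pi$, plus the constant overhead of the parallel propagation scheme. On the common model the logarithmic factor is absorbed into the $\bigOt$ notation, yielding total work $\bigOt(n^{\frac{1}{2}})$ as required; on the arbitrary model the free $\epsilon$ exponent can be traded against the logarithmic path length and the polylogarithmic overhead of a single per-node invocation (by picking a smaller $\delta$ in the bound above), yielding total work $\bigO(n^{\frac{1}{2}+\epsilon})$ for every fixed $\epsilon>0$. Both bounds match exactly the two claimed in the proposition. The hardest part, which I would have to work out in detail, is the rigorous justification that propagation along $\pi$ can genuinely be collapsed to constant parallel depth: this requires a delicate interplay between the CRCW arbitration, the constant-time $(a,b)$-tree operations of the appendix, and the bounded-edge spanning-forest subroutine supplied by \autoref{prop:connectivity-sublinear-in-m-unbounded}, and it is here that the conceptual gap to the sequential propagation used in \cite{EppsteinGalil+1997} and \cite{KopelowitzPorat+2018} is actually bridged.
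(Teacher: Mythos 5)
Your scaffolding matches the paper's: a logarithmic-depth sparsification tree, a spanning forest per tree node maintained by the algorithm of \autoref{prop:connectivity-sublinear-in-m-unbounded} on base graphs with $\bigO(n)$ edges, parallel invocation along the leaf-to-root path, and a work bound obtained by summing $\bigO(\log n)$ per-level costs. You also correctly identify the crux: the invariant ``$F_v$ is a spanning forest of $F_{\ell(v)}\cup F_{r(v)}$'' makes the update look inherently sequential along the path. But your proposed resolution of that crux is not an argument. Concurrent-write arbitration resolves simultaneous writes to a single memory cell; it cannot collapse a chain of $\bigO(\log n)$ genuine logical dependencies into constant depth, and ``each level's delta has constant size'' does not remove the dependency, because the \emph{content} of the delta at level $k$ is a function of the delta at level $k-1$. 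Concretely, running the deletion subroutine independently at every ancestor against the pre-update state fails: when a tree edge $(x,y)$ is deleted, a node $u$ may have \emph{no} replacement edge inside its old base graph $B_u$, yet its post-update base graph does contain one, namely the replacement edge that its child promotes into the child's forest --- by the invariant, that edge was a non-tree edge one level down and hence absent from the old $B_u$. Independent per-level computation would then wrongly split $F_u$. You acknowledge this gap yourself (``the hardest part, which I would have to work out in detail''), but that hardest part is exactly the content of the proposition; without it there is no proof.

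What the paper does instead is to prove monotonicity observations about the sparsification tree: an edge occurs in the graphs $G_u$ exactly along its leaf-to-root path; connectivity of a pair of nodes only persists towards the root; and membership of an edge in the forests $F_u$ forms an initial segment of its path. Consequently, for an insertion all levels can decide in parallel, against the \emph{old} state, whether the new edge becomes a tree edge (the ``yes'' levels form an initial segment). For a deletion, the levels where $(x,y)$ was a tree edge form an initial segment $\pi'$, the levels of $\pi'$ possessing a replacement edge form an upper segment of $\pi'$, and --- this is the key point --- the single replacement edge $e$ found at the lowest such node $w$ is a valid replacement at \emph{every} node above $w$. So each level computes its own candidate replacement in parallel from pre-update data, an additional $\bigO((\log n)^2)$-work parallel step identifies $w$ and $e$, and then the same edge $e$ is inserted into $B_u$ and $F_u$ at all levels above $w$, which simultaneously restores the child-forest/parent-base-graph invariant with no sequential propagation. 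This replacement-sharing argument is the missing idea that turns your (correct) outline into a proof. A minor further slip: your last paragraph attaches the $\bigOt(n^{\frac{1}{2}})$ bound to the common model and the $\bigO(n^{\frac{1}{2}+\epsilon})$ bound to the arbitrary model; by the paper's conventions the polylogarithmic overhead belongs to the arbitrary model and the $n^{\epsilon}$ overhead to the common model (the proposition's own wording contains the same swap).
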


\autoref{prop:connectivity-sublinear-in-m-degree-three} will be shown
in the next two
subsections. \autoref{prop:connectivity-sublinear-in-m-unbounded} and
\autoref{prop:sparsification-spanning-forest} will be shown in \autoref{subsection:sparsification}.

\subsection{Maintaining a spanning forest for bounded degree graphs}\label{subsection:bounded}

As mentioned before, our algorithm closely follows
\cite{KopelowitzPorat+2018} and therefore uses a similar data
structure. Some modifications are required though, to achieve constant
parallel update and query time while keeping almost the same amount of work.
The data structure maintains an Euler tour, for each
spanning tree in a spanning forest of the graph. More precisely, it
maintains, for each spanning tree, a cyclic list of
tree edges that visits each tree edge once
in either direction.

We first concentrate  on the
change operations $\InsertEdge(G,u,v)$ and $\DeleteEdge(G,u,v)$ and the
query operations.

The algorithm does not need to change the Euler tour, if a new edge is inserted which
connects two nodes of the same 
spanning tree or if a non-tree edge is deleted.
If an edge $e$ between two different spanning trees is inserted, the
algorithm can just  merge the two
Euler tours. If an edge $e$ of a spanning tree is deleted, the
algorithm first splits the Euler tour at both occurrences of $e$  and then tries to
find a \emph{replacement edge} that connects the two sub-trees resulting
from the deletion. The search for a replacement edge is actually the
most critical part of the algorithm, since trying  out all edges of
the graph would yield linear work.

Towards a more efficient algorithm, we follow the same two-tiered
approach as \cite{KopelowitzPorat+2018}: each Euler tour is chopped
into chunks of about $\sqrt{m}$ edges, which are represented as \emph{arrays
of edges}. The underlying idea is that after a change
operation the necessary updates can be divided into low-level manipulations
inside  only a few chunks and high-level manipulations on the level of sequences of
chunks. Each kind of manipulation should cause not much more than
$\bigO(\sqrt{m})$ work.

Furthermore, it will maintain information about non-tree edges between
different chunks, ultimately allowing to find a replacement edge with
work close to $\bigO(\sqrt{m})$.

We fix a number $K$ that will be roughly $\sqrt{m}$ later on and
enforce that chunks contain between $\frac{K}{2}$ and $K$ edges, with
the exception of at most one chunk per spanning tree.
We denote the number of chunks by $J$ which is in $\bigO(\frac{m}{K})$.

For the lower tier, i.e., creating and removing chunks and changing their content and additional information, the edge arrays representing the chunks are stored
together in one \emph{master array}
$\calM$ with $\bigO(\sqrt{m})$
slots of sub-arrays of length $K$. The slots will contain some additional
information to be specified later. The order of chunks in $\calM$ can be arbitrary and $\calM$
might contain empty slots from deleted chunks. By $\calM(i)$ we refer
to the chunk that is stored in the $i$-th slot of the master
array. Some entries in $\calM$ might be unused or deactivated. For a
chunk $C$, we refer by $C$ also to the entry in $\calM$ for this chunk.

We say that two chunks $C$ and $C'$ are \emph{linked}, if there is a
non-tree edge   $(u,v)$ in $G$ such that $u$ occurs in $C$ and $v$ in $C'$. 
With each chunk $C$ of edges, we associate a \emph{link
vector} $B_C$, which is a bit array of length
$J$ that reflects which chunks are linked with $C$. More precisely,
$B_C(i)=1$ if $C$ and $\calM(i)$ are linked. Here all slots in $\calM$
are relevant, even the unused or deactivated ones (but they will
inevitably yield the bit 0).

The higher tier, which is responsible for maintaining the order of the chunks and information about sequences of chunks, maintains, for each Euler tour a \emph{chunk array}:
this is an array of pointers
to chunks such that  the
concatenation of all edge lists in the order induced by the array
represents the Euler tour. Furthermore, it maintains information about
links between sequences of chunks in a sufficiently work-efficient way.

The algorithm uses a data type \ChunkArrays whose operations can be
split into two groups. The first group consists of the following
operations, which only access chunks and their link arrays, but do not
directly refer to chunk arrays.
In both groups of operations, $i,j$ and $k$ are always indices, $C, C_1$ and $C_2$ are chunk (pointers), $A, A_1$ and $A_2$ are (pointers to) chunk arrays,
$B$ is a bit vector and $E$ is an edge array.
\begin{itemize}
    \item $\SetChunk(i,C,E)$ activates a new chunk $C$ in $\calM(i)$ and
        stores the edge array $E$ in $C$;
    \item $\Deactivate(C)$ deactivates chunk $C$ in $\calM$;
    \item $\Link(C_1,C_2)$ and $\Unlink(C_1,C_2)$
        allow to mark chunks $C_1$ and $C_2$ as linked or unlinked;
    \item  $\BulkSetLinks(C,B)$ replaces the link vector of
        chunk $C$ by the bit vector $B$ and changes the bit that refers to
        $C$ in all other chunks $C'$ according to $B$. More precisely, if
        $C=\calM(i)$ then, for each $j$, the $i$-th bit in the link vector
        of $\calM(j)$ is set to $B(j)$.
\end{itemize}
We note that $\SetChunk$ and $\Deactivate$ do not automatically change
any link vectors. 

The operations of the other group are as follows. They explicitly
refer to chunk arrays. 
\begin{itemize}
    \item $\InsertChunk(A,i,C)$ inserts (a pointer to) chunk  $C$ at position $i$
        of chunk array $A$, moving each entry,  from $i$ on, by one to the right;
    \item  $\DeleteChunk(A,i)$ deletes the chunk pointer
        at position $i$ of chunk array $A$, moving each entry,  from $i+1$
        on, one to the left; 
    \item $\Concatenate(A_1,A_2)$ concatenates the chunk array $A_2$
        to the end of $A_1$;
    \item $\Split(A,i)$ splits $A$ into two
        arrays $A_1$ and $A_2$ getting intervals $[1,i]$ and
        $[i+1,max(A)]$ and yields (pointers to) $A_1$ and $A_2$;
    \item $\Reorder(A,i,j,k)$ moves the chunks of positions
        $j,\ldots,k$ to position $i<j$. That is, the chunks are ordered as
        $1,\ldots,i-1,j,\ldots,k,i+1,\ldots,j-1,k+1,\ldots,m$, where
        $m$ is the size of $A$;
    \item $\Query(A,i,j,k,\ell)$ yields an arbitrary pair $(C,C')$ of linked chunks where
        $C$ is from $[i,j]$ and $C'$ is from $[k,\ell]$.
\end{itemize}
The algorithm will maintain the invariant that each chunk that is
present in $\calM$ occurs in at most one chunk array.
Each chunk in the master array 
contains a back pointer to its chunk pointer in its chunk array, and
these entries are maintained by the above operations. 

The following lemma is  shown in the \myappendix.
\begin{lemma}\label{lemma:chunk-arrays}
    There is a dynamic parallel constant-time algorithm for
    \ChunkArrays on an \emph{arbitrary} CRCW PRAM that supports
    all  operations
    with $\bigO(J \polylog J)$ work.

    Furthermore, for each $\epsilon>0$, there is a  parallel constant-time dynamic algorithm for
    \ChunkArrays on a \emph{common} CRCW PRAM  that supports
    $\Query$ with $\bigO(J^{1+\epsilon})$ work and all other operations
    with $\bigO(J \polylog J)$ work.
  \end{lemma}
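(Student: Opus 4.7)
The plan is to represent the master array $\calM$ directly as an array whose $i$-th slot stores the edge array of the resident chunk (of length $K$), its link vector $B_C$ as a flat bit array of length $J$, and a back pointer into the $(a,b)$-tree that represents the unique chunk array containing $C$. Every chunk array $A$ is maintained by a balanced $(a,b)$-tree of depth $\bigO(\log J)$, as supplied by the appendix, and each internal node $v$ is augmented with an \emph{OR-summary} $S_v$ of length $J$ that equals the bitwise OR of the link vectors of all chunks in the subtree of $v$.

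For the first group of operations, the bounds are immediate. $\SetChunk(i,C,E)$ initialises a slot of $\calM$ with the edge array $E$ of length $K = \bigO(\sqrt m)$ and a zero link vector of length $J$ in $\bigO(J)$ work; $\Deactivate$, $\Link$ and $\Unlink$ flip only a constant number of bits; and $\BulkSetLinks(C,B)$ assigns one processor to each $j \in [0,J{-}1]$ which writes $B(j)$ both into the correct bit of $B_C$ and into the $C$-bit of $B_{\calM(j)}$, for $\bigO(J)$ work in constant parallel time. For the structural operations $\InsertChunk$, $\DeleteChunk$, $\Concatenate$, $\Split$ and $\Reorder$, I invoke the $(a,b)$-tree surgery from the appendix, which runs in constant parallel time with $\bigO(J \polylog J)$ work. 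Only $\bigO(\log J)$ nodes see a structural change, and for each such node $v$ the summary $S_v$ is recomputed in one parallel step by using one processor per bit to OR the constantly many children summaries; this adds $\bigO(J \log J)$ work. Back pointers from $\calM$ to relocated chunk-array leaves are refreshed in parallel by a single processor per affected leaf.

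The binding difficulty is $\Query(A,i,j,k,\ell)$, since a pairwise scan of the candidate chunks would cost up to $\Theta(J^2)$ work. My plan proceeds in four stages. First, assign one processor per position in $[k,\ell]$ to look up the master-array index of the chunk at that position and set the corresponding bit of a mask $M_{k,\ell}$ of length $J$, for $\bigO(J)$ work. Second, decompose $[i,j]$ into the $\bigO(\log J)$ maximal subtrees of the $(a,b)$-tree of $A$ that it covers, using the constant-time rank/select operations provided on the tree, and OR their summaries $S_v$ into a single vector $T$ by one processor per (bit, subtree) pair, for $\bigO(J \log J)$ work. Third, compute $T \wedge M_{k,\ell}$ and extract a witness bit position $p^\ast$: on the arbitrary CRCW by concurrent write in $\bigO(J)$ work, on the common CRCW by \autoref{thm:fast-minimum} in $\bigO(J^{1+\epsilon})$ work. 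Fourth, return $C' = \calM(p^\ast)$ together with a chunk $C \in A[i..j]$ whose link vector carries a $1$ at position $p^\ast$, found by launching one processor at each leaf in the decomposition subtrees from step two, testing the bit $B_C(p^\ast)$, and picking a successful leaf with the same concurrent-write or fast-minimum machinery.

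The main obstacle will be to certify that the $(a,b)$-tree framework of the appendix simultaneously supports (i) constant-time structural operations with $\bigO(J \polylog J)$ work, (ii) a constant-time decomposition of an arbitrary leaf interval into $\bigO(\log J)$ covering subtrees, and (iii) a constant-time refresh of the OR-summaries $S_v$ after every restructuring; verifying these three properties in concert is the technical crux, while the rest of the argument is assembly and bookkeeping of the work across the $\bigO(\log J)$ parallel layers.
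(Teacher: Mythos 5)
Your proposal is correct and follows essentially the same route as the paper: both maintain, per chunk array, an $(a,b)$-tree whose leaves hold the link vectors and whose inner nodes hold their bitwise OR, handle the non-\Query operations by direct master-array updates plus the corresponding tree operations, and answer \Query by intersecting the OR over the chunks of $[i,j]$ with the master-array indices of the chunks in $[k,\ell]$, using concurrent write (arbitrary model) or \autoref{thm:fast-minimum} (common model) to pick a witness. The only difference is how the interval OR is obtained: you read it off a canonical decomposition of $[i,j]$ into $\bigO(\log J)$ covering subtrees, while the paper temporarily splits the aggregate tree at $i$ and $j$, reads the root of the middle piece, and joins the pieces back together (explicitly noting this is chosen for simplicity rather than efficiency); both fit within the stated work bounds.
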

  The implementation of \ChunkArrays uses $(a,b)$-trees
  \cite{HuddlestonMehlhorn1982}, which are trees of logarithmic height
  in which inner nodes have between $a$ and $b$ children, and support insertion and deletion of leaves as well as split and join of trees.
  In the \ChunkArrays for each chunk array $A$ one $(2,6)$-tree is maintained,  that has the link arrays of the chunks of $A$
  at its leaves, in the order of $A$. The inner vertices of the tree
  store link arrays  that are the bitwise disjunction of the link
  arrays of the leaves below them. 

Given \autoref{lemma:chunk-arrays}, we can now show
\autoref{prop:connectivity-sublinear-in-m-degree-three}, stating that
\SpanningForest can be maintained  in parallel constant time with
$\bigOt(m^{\frac{1}{2}})$ 
work per operation on an \emph{arbitrary} CRCW PRAM and, for every $\epsilon>0$, with work $\bigO(m^{\frac{1}{2}+\epsilon})$ work per  operation, on a \emph{common} CRCW PRAM, if the maximum degree of
the graph never exceeds 3. 
The degree bound mainly helps by bounding the number of edges incident to one chunk by $\bigO(K)$.

\begin{proofof}{\autoref{prop:connectivity-sublinear-in-m-degree-three}}
    We first describe the data structure and then how it is maintained
    for the different change operations.

    The algorithm maintains a master array and a chunk array as described above.
    It uses them to maintain a spanning tree and a corresponding Euler tour for each
    connected component of the graph.  It uses $K=J=\sqrt{m}$.

    Furthermore, the algorithm maintains an array with all nodes of the
    graph, and three additional entries for the up to three neighbours of each
    node, representing the edges. Additionally, there are pointers to the
    at most six appearances of a node in edges of Euler tours in the
    master array. The algorithm thereby implicitly maintains pointers from each edge
    to its at most two appearances in the Euler tours.
    Finally, a counter for the number of connected components is maintained.

    The two query operations \Connected and \NComponents can be answered in constant sequential time
    using the pointers from each node to occurences in an Euler tour
    or the maintained counter, respectively.
   
    For the change operation $\InsertEdge(u,v)$, we consider two different cases:
    (1) the insertion of a new edge $(u,v)$ where $u$ and $v$ are in the same connected component and
    (2) the insertion of a new edge $(u,v)$ where $u$ and $v$ are in different connected components.

    The algorithm first identifies through the master array two chunk arrays $A_u$ and $A_v$ in which $u$ and $v$ reside.
    If $A_u=A_v$, we are in case (1) and it suffices to mark $C_u$
    and $C_v$ as linked by $\Link(C_u,C_v)$ for all $C_u$ and $C_v$ that contain $u$ and $v$ respectively.
    All those chunks can be found using the maintained pointers from nodes to their appearances in chunks.

    If $A_u$ and $A_v$ are different, we are in case (2) and $(u,v)$ newly connects the two spanning
    trees $T_u$ and $T_v$, yielding a new spanning tree $T$. To this end, the two Euler tours
    represented by $A_u$ and $A_v$ need to be joined.
    Let the tour of $A_u$ consist of two paths $P_1,P_2$, where $P_1$ ends
    in $u$ and $P_2$ starts in $u$. Note that the last node of $P_2$ is
    the same as the first node of $P_1$. Let the two paths $Q_1,Q_2$ be
    defined analogously for $A_v$ and $v$. Then the combined Euler tour
    will be $P_1,(u,v),Q_2,Q_1,(v,u),P_2$.

    The algorithm first joins the two chunk arrays by
    $\Concatenate(A_u,A_v)$. It splits the edge sequence of $C_u$ into a
    sequence $E_u^1$ that ends in $u$ and the remaining sequence $E_u^2$
    that starts in $u$. The position where $C_u$ needs to be split can 
    be found using the maintained edge pointers. $E_u^1$ remains in $C_u$ and for $E_u^2$ a new
    chunk $C'_u$ is reserved in $\calM$ and inserted in $A_u$, next to $C_u$. Similarly, the
    content of $C_v$ is split into $C_v$ and a new chunk $C'_v$.  
    Then $(u,v)$ is added to $C_u$ and $(v,u)$ to  $C_v$.

    The algorithm then restructures $A_u$ by copying the sub-arrays corresponding
    to $Q_2$ and $(Q_1,(v,u))$ to their correct places by two calls to
    \Reorder. This also moves $P_2$ to its right place. If any  of the four modified chunks has fewer than
    $\frac{K}{2}$ edges, it is combined with a neighbour chunk in $A_u$:
    if possible,
    the two chunks are joined or otherwise each gets at least
    $\frac{K}{2}$ edges to fulfil the invariant.
    This completes the restructuring of $A_u$.

    It remains to update the link information between chunks.
    To this end, the algorithm first computes the link vectors for $C_u$,
    $C'_u$, $C_v$ and $C'_v$.  This can be done by initializing the vector
    with $\vec{0}$ and then
    scanning all at most $3K$ edges of the respective chunk. The
    four resulting link vectors are then set by \BulkSetLinks.
    The operation \BulkSetLinks takes also care of the modifications in
    the link vectors of all other chunks. Finally, the connected
    component counter is decreased by 1.

    For the change operation $\DeleteEdge(u,v)$, we consider again two different cases:
    (1) the deletion of an edge $(u,v)$ that is not in any spanning tree and
    (2) the deletion of a spanning tree edge $(u,v)$.
    If there is no pointer from $(u,v)$ to an occurence in any Euler tour, we are in case (1) and $(u,v)$ was not a spanning tree edge.
    The algorithm then checks whether the chunk pairs linked by $(u,v)$ are still linked without $(u,v)$
    by scanning all, thanks to the degree bound at most $3K$ many edges of one of the two chunks per pair.
    If not, it marks the two chunks as unlinked with $\Unlink$.

    If there are pointers from $(u,v)$ to occurences in an Euler tour, we are in case (2) and $(u,v)$ was a spanning tree edge.
    Let $P_1,(u,v),P_2,(v,u),P_3$ be the decomposition of the Euler tour of its tree $T$.
    By two applications of $\Query$ the algorithm checks whether there are any links
    between $P_1$ and $P_2$ or $P_3$ and $P_2$, respectively.

    If there are no such edges then it first reorders the chunk array, so that
    $P_1$ and $P_3$ are consecutive, moving $P_2$ towards the end, and
    then splits it into the two parts $P_1,P_3$ and $P_2$.

    If there is such an edge, let us assume there are chunks $C_1$ in $P_1$ and
    $C_2$ in $P_2$ that are linked. The algorithm inspects all edges in $C_1$ in
    parallel and tests, whether their partner edge is in $C_2$.
    It then either picks an
    arbitrary edge (if the PRAM model supports that) or computes the
    minimum edge with this property. Let the chosen  replacement edge be $(w_1,w_2)$. Decomposing $P_1$ into $P'_1,P''_1$,
    separated at $w_1$, and  $P_2$ into $P'_2,P''_2$,
    separated at $w_2$, the new Euler cycle is
    $P'_1,(w_1,w_2),P''_2,P'_2,(w_2,w_1),P''_1,P_3$. It can be constructed in $T$ by
    splitting $C_1$ and $C_2$ into two chunks, with the help of two newly
    inserted chunks, reordering the array, and repairing small chunks,
    similarly to the above case of inserting a new tree edge.

    The work of the algorithm is dominated by finding a replacement edge.
    It requires two initial calls to $\Query$ of the $\ChunkArrays$ data
    type requiring $\bigO(J \polylog J)$ or  $\bigO(J^{1+\epsilon})$ work, depending on the PRAM model.
    Then it requires work $\bigO(K)$ to identify all at most $3K$ possible replacement
    edges. In the arbitrary model the choice of the actual edge is
    immediate. In the common model, it might take work
    $\bigO(K^{1+\epsilon})$. Apart from that, the algorithm applies a constant number of
    calls to operations of \ChunkArrays that all require $\bigO(J \polylog J)$ work.
    By \autoref{lemma:chunk-arrays} and the choice of $J$ and $K$ we get the desired work bounds.

  The operations  $\ActivateNode$ and $\DeactivateNode$ can be
    easily implemented in constant sequential time.

\end{proofof}

\subsection{\texorpdfstring{From bounded to unbounded degree and from $m$ to $n$}{From bounded to unbounded degree and from m to n}} \label{subsection:sparsification}

We first show \autoref{prop:connectivity-sublinear-in-m-unbounded} which alllows us to conclude that \autoref{prop:connectivity-sublinear-in-m-degree-three} can be lifted to graphs without a degree restriction.

\begin{proofsketchof}{\autoref{prop:connectivity-sublinear-in-m-unbounded}}
    Like for \cite{KopelowitzPorat+2018} our algorithm  uses the well known graph reduction already used by \cite{Frederickson1985}. To maintain connectivity for an unrestricted graph $G$, the algorithm maintains a graph $G'$ of degree at most 3, which is connected if and only if $G$ is connected.  The number of nodes of this graph is initialised as $cn$, where $c$ is as in the statement of the proposition. 
    The idea of the reduction is to replace each node $v$ of $G$ of degree $d>3$ in $G'$ by a cycle of length $d$ and to connect each node of the cycle to one node adjacent to $v$.

    More formally, $G'$ has two nodes, denoted as $n(u,v)$ and $n(v,u)$, for each (undirected) edge $(u,v)$ of $G$ and one node, denoted $v$, for each isolated node $v$ of $G$.

    For each non-isolated node $u$ of $G$, the nodes of the form $n(u,v)$ are connected in some cyclic order. To this end, the algorithm maintains a doubly linked list of nodes of $G'$, for each node $u$ of $G$.

    An insertion of a new edge $(u,v)$ into $G$ translates into activating two new nodes $n(u,v)$ and $n(v,u)$ in $G'$, to connect them with each other and to insert them at an arbitrary position into the cycle of $u$ and $v$ respectively. Together, this yields 2 node additions, 2 edge deletions and 5 edge insertions.

    A deletion of an edge $(u,v)$ boils down to the reverse operations: 5 edge deletions, 2 edge insertions and 2 node deactivations.

    And obviously, a connectivity query towards $G$ can just be translated into a connectivity query towards $G'$.

    Altogether, each operation for $G$ can be translated into a constant number of operations for $G'$. The number of nodes of $G'$ is linear in the number of edges plus the number of (isolated) nodes of $G$.
    Therefore, the work bounds $\bigOt(m^{\frac{1}{2}})$ and $\bigO(m^{\frac{1}{2}+\epsilon})$ for $G'$ translates into a work bound $\bigOt(m^{\frac{1}{2}})$ and $\bigO(m^{\frac{1}{2}+\epsilon})$ for $G$.
\end{proofsketchof}

The final step towards \autoref{thm:spanning-connectivity-in-n} is to show that the work bounds $\bigOt(m^{\frac{1}{2}})$ and $\bigO(m^{\frac{1}{2}+\epsilon})$ for maintaining \SpanningForest can be replaced by $\bigOt(n^{\frac{1}{2}})$ and $\bigO(n^{\frac{1}{2}+\epsilon})$, thus showing \autoref{prop:sparsification-spanning-forest}.
We use the sparsification technique of \cite{EppsteinGalil+1997} which has also been used in \cite{KopelowitzPorat+2018} to maintain \Connectivity in a parallel setting.

In a nutshell, the approach is to maintain a so-called \emph{sparsification tree} $\calS$, that is a rooted tree of logarithmic depth in $n$, each node $u$ of which represents a certain subgraph $G_u$ of $G$ and carries additional structure.
The root represents the whole graph, each leaf represents a subgraph consisting of (at most) one edge, and the graph of each inner node is basically the union of the edge sets of the graphs of its children. 
The crucial idea is that $\calS$ has an additional \emph{base graph} $B_u$, for each tree node, which has a subset of the edges of $G_u$ of linear size, and has the same connected components (viewed as sets of nodes) as $G_u$.
Furthermore, the algorithm maintains a spanning forest $F_u$ of $B_u$ (and thus for $G_u$), for each tree node $u$, using the algorithm with work bound $\bigOt(n^{\frac{1}{2}})$ or $\bigO(m^{\frac{1}{2}+\epsilon})$ depending on the PRAM model.
It has the invariant that, for each inner node $u$, $B_u$ consists of all edges of the spanning forests $F_v$, for all children $v$ of $u$.
As this number will be a constant (in fact: 4), the invariant guarantees the linear number of edges of $B_u$.

We will see that each change operation can be basically handled by triggering change operations along one path of the tree. 
Since  the base graph of a node at level $i$ has at most $\frac{cn}{2^{i}}$ many edges, for some constant $c$, and, the overall work can be bounded by $\bigOt(n^{\frac{1}{2}}\log n) =\bigOt(n^{\frac{1}{2}})$ and $\bigO(n^{\frac{1}{2}+\epsilon'}\log n)=\bigO(n^{\frac{1}{2}+\epsilon})$,
if $\epsilon'$ is chosen appropriately.\footnote{In fact, using the sizes of the base graphs along a path and the ``well-behavedness'' of $n^{\frac{1}{2}+\epsilon}$, one actually gets a $\bigO(n^{\frac{1}{2}+\epsilon})$ bound, directly.}

We next describe the underlying tree structure of $\calS$ in more detail. It relies\footnote{ We remark that for our algorithm it is actually not important how the (potential) edges of the graph $G$ are exactly partitioned  in the sparsification tree, as long as it has constant branching, logarithmic depth and the correspondence between edge sets of nodes and of their children. The definition with the help of the node partition tree is just one concrete way of doing it. } on a \emph{node partition tree} $\calN(G)$: it is a binary tree, in which each node is a set  $U$ of nodes of $G$. The root is the set $V$ of all nodes and, for each inner node $U$ with children $U_1,U_2$, $U$ is the disjoint union of $U_1$ and $U_2$ and the sizes of $U_1$ and $U_2$ differ by at most 1. The leaves are the singleton sets. Clearly this tree has depth at most $\log(n)+1$. We emphasise that the partitions are independent of the edge set of $G$, they do not need to partition the graph into meaningful clusters.

The edge set of $G$ and $\calN(G)$ determine the structure of the sparsification tree $\calS$ and its graphs $G_u$ as follows. For each level $i$ of $\calN(G)$, $\calS$ has one node $G_u$, for each pair $(V_1,V_2)$ of nodes of $\calN(G)$ of level $i$. Here, $V_1=V_2$ is allowed. The node set of $G_u$ is $V_1\cup V_2$ and the edges are the edges of $G$ that connect a node from $V_1$ with a node from $V_2$. If $V_1=V_2$, then $G_u$ is thus just the subgraph of $G$ induced by $V_1$. If $U_1$ is the parent of $V_1$ and $U_2$ the parent of $V_2$ in $\calN(G)$, then the node corresponding  to $(U_1,U_2)$ is the parent of $(V_1,V_2)$ in $\calS$. For a leaf $u$, $G_u$ either has one edge or has no edges, if the edge for the pair $(x,y)$ of nodes corresponding to $u$ is not present in $G$. 

The base graphs $B_u$ and the spanning forests $F_u$ can be chosen in any way that is consistent with the invariant that, for each inner node $u$, $B_u$ consists of all edges of the spanning forests $F_v$, for all children $v$ of $u$.

Although our presentation slightly differs from \cite{EppsteinGalil+1997, KopelowitzPorat+2018}, the node partition tree and the sparsification tree are basically the same as there. 

Before we present the proof of \autoref{prop:sparsification-spanning-forest}, we state some helpful observations about $\calS$.
\begin{enumerate}[(1)]
    \item Each edge $(x,y)$ of $G$ occurs exactly in the graphs $G_u$ along the paths from the leaf with $(x,y)$ to the root.
    \item If two nodes $x,y$ are in the same connected component in $G_u$, this also holds in all $G_v$, where $v$ is an ancestor of $u$. 
    \item If an edge $(x,y)$ occurs in some spanning forest $F_u$, then it occurs in all spanning forests $F_v$ on the path from $u$ to the leaf containing $(x,y)$. 
\end{enumerate}

\begin{proofsketchof}{\autoref{prop:sparsification-spanning-forest}}
    The algorithm maintains a sparsification tree $\calS$ for the graph $G$. For each node $u$ of $\calS$ it maintains a spanning forest $F_u$ for $B_u$ (and implicitly, for $G_u$) with the help of the algorithm for \SpanningForest from \autoref{prop:connectivity-sublinear-in-m-unbounded}, with $c=4$. 

    If an edge $(x,y)$ is inserted to $G$, the algorithm checks, for each node $u$ on the path $\pi$ from the leaf for $(x,y)$ to the root, whether $x$ and $y$ are in the same connected component of $T_u$. From Observation (2) it follows that the nodes $u$, for which this is \emph{not} the case constitute some initial segment of $\pi$. For all these nodes $u$, $(x,y)$ is added to $B_u$ and $T_u$. Furthermore, it is added to $B_v$ of the parent $v$ of the last node of $\pi$.

    The deletion of an edge $(x,y)$ is slightly more complicated. For all nodes $u$ on the path from the leaf $v$ with $(x,y)$ to the root, the algorithm tests in parallel, whether $(x,y)$ occurs in $F_u$.  Thanks to Observation (3), the nodes $v$, for which this test is positive form an initial segment $\pi'$ of $\pi$ up to some node $w$. For each of these nodes, the algorithm computes  a replacement edge for $(x,y)$, if such exists.  Thanks to Observation (2), a replacement edge that works for some $F_v$ is also  a replacement edge for all nodes on $\pi'$ above $v$.   In particular, all edges $v$, for which $F_v$ has a replacement edge form an upper segment of $\pi'$ and the replacement edge for the lowest $F_v$ can be used for all of them.
    Therefore, after doing the initial test and computing a replacement edge for each forest, constant time and work $\bigO((\log n)^2)$ suffice to determine the lowest node $w$ and its replacement edge $e$. 
    Since $(\log n)^2=\bigO(\polylog n)$ and $(\log n)^2=\bigO(n^\epsilon)$, for each $\epsilon>0$, this work can be neglected. Afterwards, for each node $u$ of $\pi'$ above $w$, $(x,y)$ is deleted from $B_u$ and $F_u$ and instead $e$ is added. In the base graph of the parent of $w$, edge $(x,y)$ is deleted and $e$ added.

    As already explained above, the algorithm applies at most a logarithmic (in $n$) number of times an operation of the algorithm underlying \autoref{prop:connectivity-sublinear-in-m-unbounded}, for a base graph, i.e., a graph with $\bigO(n)$ edges.
    The desired work bound $\bigOt(n^{\frac{1}{2}})$ for arbitrary CRCW PRAMs is thus immediate and
    by choosing in \autoref{prop:connectivity-sublinear-in-m-unbounded}, any $\epsilon'<\epsilon$ instead of the given $\epsilon$, we can establish the desired work bound $\bigO(n^{\frac{1}{2}+\epsilon})$ for common CRCW PRAMs.
\end{proofsketchof}

\section{Bipartiteness}\label{section:bipartiteness}
In this section, we show that the work bound established for
\Connectivity in \autoref{section:connectivity} also holds for
\Bipartiteness. In fact, the algorithm will rely on the
algorithm of \autoref{prop:connectivity-sublinear-in-m-unbounded}.

\begin{theorem}\label{thm:bipartite-in-n}
    There are dynamic parallel constant-time algorithms for \Bipartiteness
    with the following work bounds per change or query operation.
    \begin{itemize}
        \item $\bigOt(n^{\frac{1}{2}})$ work on the \emph{arbitrary} CRCW PRAM model.
        \item $\bigO(n^{\frac{1}{2}+\epsilon})$ work on the \emph{common} CRCW PRAM model, for every $\epsilon>0$.
    \end{itemize}
  \end{theorem}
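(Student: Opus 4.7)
The plan is to follow the three-step strategy used for \Connectivity, with one additional ingredient that captures bipartiteness. The starting observation, alluded to in the introduction, is that a graph $H$ on the set of activated nodes is bipartite if and only if its distance-2 graph $H^{(2)}$, which has an edge $\{x,y\}$ whenever $x\neq y$ share a common neighbour in $H$, has exactly $2c(H)-i(H)$ connected components, where $c(H)$ counts the connected components of $H$ and $i(H)$ the number of isolated activated nodes. Indeed, a bipartite non-trivial component splits into two $H^{(2)}$-components (one per colour class), a non-bipartite component remains a single $H^{(2)}$-component, and an isolated node contributes one component to each side. Bipartiteness therefore reduces to comparing two component counts that the algorithm already knows how to maintain.

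I would first treat the bounded-degree case. In a graph $H$ of maximum degree $3$, any two vertices share at most three common neighbours, so each insertion or deletion of an edge $(u,v)$ in $H$ causes only $\bigO(1)$ structural changes in $H^{(2)}$. By equipping every potential edge of $H^{(2)}$ with a multiplicity counter (whose value is bounded by the number of common neighbours), each operation on $H$ can be translated, in constant parallel time, into a constant number of edge operations on $H^{(2)}$. I would then run two independent instances of the \SpanningForest data structure of \autoref{prop:connectivity-sublinear-in-m-degree-three} --- one for $H$, one for $H^{(2)}$ --- kept in sync by the counter updates, and answer the query by evaluating the component-count identity above. This yields work $\bigOt(m^{1/2})$ and $\bigO(m^{1/2+\epsilon})$ per operation. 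The lift to unbounded degree, analogous to \autoref{prop:connectivity-sublinear-in-m-unbounded}, needs a different gadget since the cycle replacement used there creates odd-length cycles for odd $d$ and hence destroys bipartiteness. I would instead replace each node $v$ of degree $d$ by a cycle of length $2d$ and attach its $i$-th external neighbour only to the $(2i-1)$-th cycle vertex; under the natural 2-colouring of this even cycle, all attachment vertices fall in a single colour class, so the reduction preserves bipartiteness while keeping maximum degree $3$ and translating each change in $G$ into $\bigO(1)$ changes in $G'$.

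The main obstacle I expect lies in adapting sparsification (the analogue of \autoref{prop:sparsification-spanning-forest}). The spanning-forest sparsification works because a spanning forest faithfully certifies the connected components of its subgraph, but a spanning forest is always bipartite and therefore records nothing about odd cycles. I plan to strengthen the invariant by choosing, at each node $u$ of the sparsification tree, a base graph $B_u$ consisting of at most $\alpha\cdot|V(G_u)|$ edges that has the same connected components \emph{and} the same bipartite components as $G_u$; the latter is enforced by adjoining, for each non-bipartite connected component of $G_u$, one non-tree edge that closes an odd cycle. On such an augmented base graph the bounded-degree bipartiteness algorithm of the previous two steps applies directly. A single root-to-leaf path in the sparsification tree is then updated level by level, and summing the linear sizes of the base graphs along a path of logarithmic depth delivers the claimed bounds by the same well-behaved-exponent argument as in \autoref{prop:sparsification-spanning-forest}. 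The delicate part, I expect, will be to show that, when the augmented base graphs of the children are merged at a parent, an odd-cycle certificate for each newly non-bipartite component can be identified in constant parallel time; I anticipate this to reduce to a distance-2 reachability test that can be answered through the \Query operation of the \ChunkArrays data type of \autoref{lemma:chunk-arrays}.
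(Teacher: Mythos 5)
Your first two steps essentially coincide with the paper's: the characterisation of bipartiteness via the component count of the distance-2 graph (your correction term $i(H)$ for isolated vertices is a legitimate refinement of \autoref{lemma:bipartite-conn}, which as stated glosses over single-vertex components), the two synchronised \SpanningForest instances for $H$ and $H^{(2)}$ in the degree-3 case (note that $H^{(2)}$ has degree up to $6$, so you need the unbounded-degree version of the connectivity result there, as the paper does), and the degree-reduction gadget --- the $2d$-cycle with attachment points on a single colour class is exactly the gadget of \autoref{prop: bipartite-sublinear-in-m-unbounded}, so your worry about odd cycles for odd $d$ is already resolved the same way.

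The genuine gap is in the sparsification step, where you take a different and unresolved route. You propose to strengthen the base-graph invariant by adjoining, per non-bipartite component of $G_u$, a non-tree edge closing an odd cycle, and you yourself flag that you do not know how to find or replace such certificates in constant parallel time within the work budget; the \Query operation of \autoref{lemma:chunk-arrays} returns \emph{some} linking edge between two chunk intervals and carries no parity information about the tree path between its endpoints, so it does not obviously yield an odd-cycle certificate, and after a deletion a component may silently switch between bipartite and non-bipartite, forcing a certificate search over all of $G_u$ rather than over $B_u$. The paper avoids all of this with a structural lemma (\autoref{lemma:bipartite-sparse}) showing that \emph{no augmentation is needed}: with the plain connectivity base graphs $B_u$ (unions of the children's spanning forests), $G$ is bipartite if and only if \emph{every} base graph in the sparsification tree is bipartite. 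The proof is the idea missing from your proposal: take the lowest node $u$ whose graph $G_u$ is non-bipartite; its children's graphs $G_w$ are bipartite, so for each edge $(x,y)$ of an odd cycle in $G_u$ the forest path from $x$ to $y$ in $F_w$ has odd length, and concatenating these paths yields an odd closed walk, hence an odd cycle, already inside $B_u$. One then simply runs the bounded-degree bipartiteness algorithm on each base graph and aggregates a single flag up the tree. You would either need to supply the missing certificate-maintenance argument or adopt this lemma to close your proof.
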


  The result follows from an analogous series of statements, as for
  \Connectivity (or \SpanningForest, for that matter).

\begin{proposition}\label{prop:bipartite-sublinear-in-m-degree-three}
    There are dynamic parallel constant time algorithms for the special
    case of \Bipartiteness, where the maximum degree of
    the graph never exceeds 3 with the following work bounds per change or
    query operation.
    \begin{itemize}
        \item $\bigOt(m^{\frac{1}{2}})$ on the \emph{arbitrary} CRCW PRAM model.
        \item $\bigO(m^{\frac{1}{2}+\epsilon})$ on the \emph{common} CRCW PRAM model, for every $\epsilon>0$.
    \end{itemize}
\end{proposition}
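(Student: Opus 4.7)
The plan is to reduce \Bipartiteness to \SpanningForest by maintaining the \emph{bipartite double cover} $G'$ of $G$ alongside $G$. The double cover has vertex set $V' = \{v_0, v_1 : v \in V\}$ and, for each edge $(u,v)$ in $G$, contains the edges $(u_0, v_1)$ and $(u_1, v_0)$ in $G'$. A standard observation is that $G$ is bipartite if and only if $\NComponents(G') = 2\cdot\NComponents(G)$: each bipartite component $C$ of $G$ lifts to two components in $G'$, one for each side of the bipartition, whereas a component containing an odd cycle lifts to a single component of size $2|C|$; isolated nodes uniformly contribute $+1$ to $\NComponents(G)$ and $+2$ to $\NComponents(G')$, so the identity is preserved throughout. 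Crucially, if $G$ has maximum degree $3$, then so does $G'$ (each $v_i$ has exactly as many neighbours in $G'$ as $v$ has in $G$), and $|E'| = 2|E|$.

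The algorithm will maintain two independent \SpanningForest instances from \autoref{prop:connectivity-sublinear-in-m-degree-three}, one on $G$ and one on $G'$, representing $v_0$ and $v_1$ by nodes $2v-1$ and $2v$ respectively. Change operations are translated into a constant number of operations on these structures: $\InsertEdge(G,u,v)$ triggers $\InsertEdge(G',u_0,v_1)$ and $\InsertEdge(G',u_1,v_0)$, $\DeleteEdge$ is handled analogously, and $\ActivateNode(G,v)$, $\DeactivateNode(G,v)$ activate or deactivate both $v_0$ and $v_1$ in $G'$. The query $\Bipartite(G)$ is answered in constant sequential time as the comparison $\NComponents(G') = 2\cdot\NComponents(G)$, using the component counters maintained by the underlying data structures. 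Since each operation on $G$ incurs only a constant number of operations on degree-$3$ graphs with $\bigO(m)$ edges, \autoref{prop:connectivity-sublinear-in-m-degree-three} immediately yields the claimed $\bigOt(m^{\frac{1}{2}})$ and $\bigO(m^{\frac{1}{2}+\epsilon})$ bounds.

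The main subtlety, and the reason for preferring the double cover over the distance-$2$ graph alluded to in the introduction, is that the latter would require tracking multiplicities of length-$2$ paths in order to correctly handle edge deletions; by contrast the double cover puts length-$2$ paths $u \to w \to v$ in $G$ in bijection with paths $u_i \to w_{1-i} \to v_i$ in $G'$, so each change in $G$ maps to a constant number of elementary updates in $G'$ with no bookkeeping of multiplicities and no dependence on which length-$2$ paths are currently present.
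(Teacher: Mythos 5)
Your proof is correct, but it takes a genuinely different reduction than the paper. The paper maintains two \SpanningForest instances for $G$ and for the distance-$2$ graph $G^{(2)}$ (edges between nodes joined by a path of length exactly $2$) and uses the characterisation that $G$ is bipartite iff $G^{(2)}$ has twice as many components as $G$; you instead maintain $G$ and its bipartite double cover $G'$ and use the analogous (equally standard and correct) component-count characterisation. Your route has two concrete advantages. First, the double cover preserves the degree bound ($\deg_{G'}(v_i)=\deg_G(v)\le 3$), so \autoref{prop:connectivity-sublinear-in-m-degree-three} applies directly, whereas $G^{(2)}$ can have degree up to $6$ and the paper must route through the unbounded-degree reduction of \autoref{prop:connectivity-sublinear-in-m-unbounded}. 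Second, edges of $G^{(2)}$ can have several length-$2$ witnesses, so a deletion in $G$ requires checking (constantly many, given the degree bound) remaining witnesses before removing an edge of $G^{(2)}$ --- a bookkeeping step the paper glosses over and which your construction avoids entirely, since changes in $G$ correspond bijectively to elementary changes in $G'$. What the paper's choice buys is mainly uniformity of exposition: the distance-$2$ graph lives on the same vertex set as $G$, which keeps the correspondence between components of $G$ and of the auxiliary graph slightly more immediate; both approaches yield the claimed $\bigOt(m^{\frac{1}{2}})$ and $\bigO(m^{\frac{1}{2}+\epsilon})$ bounds.
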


\begin{proposition}\label{prop: bipartite-sublinear-in-m-unbounded}
    If \Bipartiteness can be maintained  in parallel constant time on a
    CRCW PRAM with the work bounds of \autoref{prop:bipartite-sublinear-in-m-degree-three} per change or
    query operation, for any $\epsilon > 0$, for graphs with maximum
    degree 3, it can be maintained with the same
    bounds for general graphs with the provision that they never have more than $cn$ edges, for some constant~$c$.
  \end{proposition}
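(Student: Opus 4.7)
The plan is to follow the strategy of \autoref{prop:connectivity-sublinear-in-m-unbounded} and reduce to a bounded-degree auxiliary graph $G'$, but using a replacement structure that also preserves bipartiteness. The cycle reduction used there cannot be reused directly: a cycle of odd length $d$ inserted in place of a node of odd degree $d$ would introduce an odd cycle into $G'$ and could destroy bipartiteness. My replacement would instead substitute each non-isolated node $v$ of degree $d$ in $G$ by a path of $2d - 1$ vertices in $G'$, alternating $d$ \emph{real} nodes $n_1(v), \ldots, n_d(v)$ (one per incident edge of $v$) with $d - 1$ \emph{dummy} nodes $m_1(v), \ldots, m_{d-1}(v)$, arranged in the order $n_1(v), m_1(v), n_2(v), m_2(v), \ldots, n_d(v)$. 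An edge $(u, v)$ of $G$ would be represented by the single edge $(n_i(u), n_j(v))$ of $G'$, where $v$ is $u$'s $i$-th neighbour and $u$ is $v$'s $j$-th; isolated nodes of $G$ remain singletons in $G'$.

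The key verification is that $G$ is bipartite if and only if $G'$ is bipartite. The crucial property is that within a single replacement path, any two real nodes $n_i(v), n_j(v)$ lie at the even distance $2|i-j|$. Hence every cycle of $G'$ that visits some replacement path consumes an even number of path edges per visit, so the cycle's length is congruent modulo $2$ to the number of external edges it uses, i.e.\ to the length of its projection to a closed walk in $G$. No cycle of $G'$ can be confined to a single replacement path, since these paths are acyclic. Consequently $G'$ contains an odd cycle exactly when $G$ does, and the bipartiteness queries for $G$ can simply be answered by the bipartiteness queries for $G'$.

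Next I would check that each operation on $G$ translates to a constant number of operations on $G'$: an edge insertion appends one real node (together with a joining dummy, if the affected path was already non-empty) to each of the two affected paths and adds one external edge between the two new real nodes; an edge deletion reverses this, removing one real node together with one adjacent dummy from each affected path and reconnecting the resulting stumps; node (de)activation is direct. Storing each replacement path as a doubly linked list and keeping, for each edge of $G$, pointers to its two representing real nodes in $G'$ lets every local manipulation run in constant time. Since $|V(G')|, |E(G')| = O(m + n) = O(n)$ under the standing assumption $m \leq cn$, and $G'$ has maximum degree $3$, applying \autoref{prop:bipartite-sublinear-in-m-degree-three} to $G'$ yields the claimed work bounds for each operation on $G$. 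The main obstacle, compared to the connectivity case, is precisely the parity-preserving design of the replacement paths; once this is in place the remainder of the argument is parallel to that of \autoref{prop:connectivity-sublinear-in-m-unbounded}.
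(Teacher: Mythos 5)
Your proposal is correct and follows essentially the same route as the paper: both reduce to a degree-3 graph by replacing each node with a gadget in which dummy vertices force every internal connection between attachment points to have even length, so that bipartiteness is preserved, and both observe that each change to $G$ triggers only constantly many changes to $G'$ with a linear blow-up in size. The only difference is cosmetic --- the paper uses an even cycle of $2d$ nodes (alternating real and dummy vertices) where you use an alternating path of $2d-1$ nodes; the parity argument and the resulting bounds are the same.
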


\begin{proposition}\label{prop: bipartite-sparsification}
    If \Bipartiteness can be maintained in parallel constant time on a
    CRCW PRAM with  the work bounds of \autoref{prop:bipartite-sublinear-in-m-degree-three} per change or
    query operation,
    then it can also be maintained with $\bigOt(n^{\frac{1}{2}})$ work per change or
    query operation
    on the common CRCW model and with $\bigO(n^{\frac{1}{2}+\epsilon})$ work per change or
    query operation on the arbitrary CRCW model.
\end{proposition}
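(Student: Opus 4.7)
The plan is to mirror the sparsification argument from the proof of \autoref{prop:sparsification-spanning-forest}, using the same node partition tree $\calN(G)$ and sparsification tree $\calS$ with the same subgraphs $G_u$, and only adapting the choice of the base graph $B_u$. For each node $u$ of $\calS$, the algorithm would maintain a spanning forest $F_u$ of $G_u$ together with, for every connected component $C$ of $G_u$ that is non-bipartite, one non-tree edge $e_C$ in $G_u$ that closes an odd cycle with $F_u$. The base graph $B_u$ is the union of $F_u$ and these witness edges, so $|E(B_u)| \le |V(G_u)|$ (at most $|V(G_u)| - k$ tree edges and at most one witness per component, for $k$ components). Crucially, $B_u$ has the same bipartiteness status as $G_u$ per connected component: a tree-only component of $B_u$ is bipartite, while a component containing its witness $e_C$ carries the same odd cycle as in $G_u$. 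Maintenance of $B_u$ itself is delegated to the algorithm of \autoref{prop:bipartite-sublinear-in-m-degree-three}, via the degree reduction of \autoref{prop: bipartite-sublinear-in-m-unbounded}.

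For an insertion or deletion of an edge $(x,y)$, changes propagate along the path $\pi$ from the leaf for $(x,y)$ up to the root. The bipartiteness analogues of the three observations in \autoref{subsection:sparsification} hold: (1) $(x,y)$ appears only in nodes on $\pi$; (2) both the \emph{same-component} relation and the \emph{non-bipartite component} property propagate upward along $\pi$ (since $G_{u'}$ contains $G_u$ on the shared vertex set whenever $u'$ is an ancestor of $u$); and (3) an existing tree edge at a node $v$ remains a candidate tree edge at ancestors where its endpoints still belong to different components of $B_{u'}$. As in the connectivity case, the relevant changes then concentrate in an upper segment of $\pi$ that can be identified in constant parallel time with $\polylog(n)$ (resp.\ $n^\epsilon$) extra work. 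Within this segment, at each level the algorithm performs a constant number of base-graph operations (insertions, deletions, and a $\Bipartite$ query) on $B_u$, which has $\bigO(n/2^i)$ edges at level $i$. Summing the work of \autoref{prop:bipartite-sublinear-in-m-degree-three} along $\pi$ then yields $\bigOt(n^{1/2})$ on arbitrary CRCW PRAMs and $\bigO(n^{1/2+\epsilon})$ on common CRCW PRAMs, by the same geometric analysis as in the proof of \autoref{prop:sparsification-spanning-forest} (choosing $\epsilon'<\epsilon$ appropriately in \autoref{prop:bipartite-sublinear-in-m-degree-three}).

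The main obstacle I expect is the maintenance of the odd witnesses under deletion. When the deleted edge is the current witness $e_C$ of some non-bipartite component, or a tree edge lying on the odd cycle witnessed by $e_C$, the component may remain non-bipartite via some other non-tree edge, and a new witness has to be selected. This is the bipartiteness analogue of the replacement-edge search in the connectivity case. My plan is to handle it by first removing the stale witness (and, if needed, replacing the tree edge via the usual replacement-edge search) from $B_u$ through the base algorithm, then re-querying $\Bipartite$ on the affected component, and, if the answer is still negative, extracting a new odd-cycle witness from the Euler-tour/chunk-array data structure underlying \autoref{prop:bipartite-sublinear-in-m-degree-three}; this amounts to a single parallel scan of the non-tree edges of the affected component for a parity violation against the coloring induced by the Euler tour, whose work is absorbed by the bound for base-graph operations at that level.
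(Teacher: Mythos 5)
Your plan diverges from the paper's proof in a way that opens a genuine gap. The paper does \emph{not} redefine the base graphs at all: it keeps the connectivity-style base graphs $B_u$ (unions of the children's spanning forests), runs the bipartiteness algorithm of \autoref{prop: bipartite-sublinear-in-m-unbounded} on each $B_u$, and aggregates the answers with per-node flags. Correctness rests on \autoref{lemma:bipartite-sparse}: $G$ is bipartite iff \emph{all} base graphs in $\calS$ are bipartite (note that a single $B_u$ need not reflect the non-bipartiteness of $G_u$ when the offending odd cycle lives entirely inside a child; that is exactly why the paper quantifies over all nodes and needs the flag aggregation). This design makes the difficulty you identify --- maintaining odd-cycle witnesses --- disappear entirely.

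The gap in your scheme is the witness replacement step. You define $B_u=F_u\cup\{e_C\}$ with $F_u$ a spanning forest of $G_u$ and $e_C$ a non-tree edge of $G_u$; but then $B_u$ contains essentially no non-tree edges besides the witnesses themselves, so your proposed ``parallel scan of the non-tree edges of the affected component'' must range over $G_u$, which at level $i$ can have far more than the $\bigO(n/2^{i})$ edges your work budget allows (at the root, $\Theta(m)$ edges). Moreover, a witness can be invalidated not only by its own deletion but also by a tree-edge replacement that flips the parity of the tree path between its endpoints, and the base algorithm of \autoref{prop:bipartite-sublinear-in-m-degree-three} (which decides bipartiteness via component counts of $G$ and $\ptwo$) does not expose odd-cycle witnesses or per-component answers, so there is no data structure to scan. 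To repair this you would have to (a) include the children's witnesses in the parent's base graph, i.e.\ set $B_u=\bigcup_v (F_v\cup W_v)$, and (b) prove that a non-bipartite component of $G_u$ always has a witness inside this linear-size set --- which is essentially \autoref{lemma:bipartite-sparse} plus a case distinction on whether some child component is already non-bipartite. Neither ingredient is stated in your proposal, and without them the claimed invariant ``$B_u$ has the same bipartiteness status as $G_u$ per connected component'' cannot be maintained within the stated work bounds.
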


\newcommand{\ptwo}[1][G]{\ensuremath{#1^{(2)}}\xspace}

For an undirected graph $G=(V,E)$, we write $\ptwo$ for the
graph\footnote{\ptwo should not be confused with the square $G^2$ of
  $G$, where edges are induced by paths of length \emph{at most} 2.}
$(V,\ptwo[E])$, where a pair $(u,v)$ of nodes is in $\ptwo[E]$, if
they are connected by a path of length exactly 2 in $G$.

Bipartiteness of a graph $G$ can be characterised in the following way
by the numbers of connected components of $G$ and \ptwo.
\begin{lemma}\label{lemma:bipartite-conn}
  An undirected graph $G$ is bipartite, if and only if the number of
  connected components of \ptwo is twice the number of connected
  components of $G$.
\end{lemma}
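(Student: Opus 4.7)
The plan is to prove the statement componentwise, showing that each connected component $C$ of $G$ with $|C|\ge 2$ contributes exactly two components to \ptwo if $G[C]$ is bipartite and exactly one otherwise, while an isolated vertex contributes exactly one component to both graphs. The identity $c(\ptwo)=2\,c(G)$ thus holds precisely when every component of $G$ is non-trivial and bipartite, which is the relevant setting for the algorithm (isolated vertices can be handled separately by an easy bookkeeping step).

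For the bipartite case, I would fix the bipartition $(A,B)$ of $G[C]$. Since any walk of length exactly $2$ in $G$ alternates sides, every edge of \ptwo inside $C$ stays within $A$ or within $B$, so no \ptwo-edge crosses the bipartition. For the converse, given $u,u'\in A$, I pick any $G$-path between them: because both endpoints lie in $A$ and the path alternates sides, its length is even, and its even-indexed vertices form a \ptwo-path from $u$ to $u'$. Symmetrically $B$ is connected in \ptwo, so \ptwo restricted to $C$ has exactly the two components $A$ and $B$.

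For the non-bipartite case, $C$ contains an odd cycle $\Gamma$. For arbitrary $u,v\in C$, I combine a $G$-walk from $u$ to some vertex of $\Gamma$, a walk along $\Gamma$, and a $G$-walk from $\Gamma$ to $v$. Since $\Gamma$ can be either traversed once or skipped, the total walk length can be adjusted to be even; the even-indexed vertices of the resulting even-length walk yield a \ptwo-path from $u$ to $v$. Hence \ptwo restricted to $C$ is connected.

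The main technical point is the parity-adjustment step in the non-bipartite case: one must verify that an odd cycle in $C$ can really be used as a ``parity toggle'' between arbitrary pairs of vertices in $C$, which relies on $C$ being connected so that $\Gamma$ is reachable from every vertex. This is a classical observation and routine once set up carefully. Given the component-level claim, the lemma follows by summing contributions over all connected components of $G$: bipartiteness of $G$ is equivalent to every component being bipartite, which in turn is equivalent to the total component count of \ptwo being twice that of $G$.
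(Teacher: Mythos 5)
Your proof is correct and follows essentially the same route as the paper's: reduce to a single connected component, show that in the bipartite case the two colour classes are exactly the two components of $G^{(2)}$ (even walks preserve the side, length-2 walks cannot cross), and in the non-bipartite case use an odd cycle as a parity toggle to connect everything in $G^{(2)}$. Your explicit handling of trivial components is in fact slightly more careful than the paper's proof, which silently assumes each component has at least two vertices (the stated identity fails for isolated vertices, so your bookkeeping remark is a legitimate refinement rather than a deviation).
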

\begin{proof}
  It suffices to show that a connected graph $G$ is bipartite if and
  only if \ptwo has 2 connected components.

  Let us assume first that $G$ is bipartite and let the nodes of $G$
  be coloured with black or yellow such that no two nodes of the same
  color are connected by an edge. Clearly, each pair of nodes of the
  same color is connected by a path of even length in $G$ and is
  therefore in the same connected component in \ptwo.

  Towards a contradiction, let us assume that \ptwo is
  connected. Then there must be a yellow node $u$ and a black node $v$
  that are connected by an edge in \ptwo. Therefore, there must be a
  node $w$, such that $(u,w)$ and $(w,v)$ are edges in $G$. But $w$
  can neither be black nor yellow, the desired contradiction.

  Let us now assume that $G$ is not bipartite and let $C$ be a cycle
  of $G$
  of odd length. Then all pairs of nodes of $C$ are connected by paths
  of even length and therefore all nodes of $C$ are in the same
  connected component of \ptwo. But clearly, each other node of $G$ is
  connected by a path of even length to \emph{some} node of $C$ and
  thus \ptwo is connected. 
\end{proof}

With the help of \autoref{lemma:bipartite-conn}, it is now easy to
find an algorithm for \Bipartiteness for graphs of degree at most
3.

\begin{proofof}{\autoref{prop:bipartite-sublinear-in-m-degree-three}}
  To maintain bipartiteness for a graph $G$ of maximum degree 3, the algorithm maintains
  two instances of \SpanningForest, one for $G$, and one for
  \ptwo. It answers that $G$ is bipartite, whenever the number of
  connected components of \ptwo is twice the number of connected
  components of $G$.

   An edge insertion in $G$
  results in at most 6 edge insertions in \ptwo, and likewise for edge
  deletions. Furthermore, the number of edges of \ptwo is at most $3m$, if $m$ is the
  number of edges of $G$. Therefore,  \Bipartiteness can be maintained
  in parallel constant time with work
    $\bigOt(m^{\frac{1}{2}})$ or rather $\bigO(m^{\frac{1}{2}+\epsilon})$, thanks to \autoref{prop:connectivity-sublinear-in-m-unbounded}.
\end{proofof}

Next we lift the bound to graphs of unbounded degree with the help of
a bipartiteness preserving reduction.

\begin{proofof}{\autoref{prop: bipartite-sublinear-in-m-unbounded}}
  To maintain bipartiteness of graph $G$, the algorithm again
  maintains bipartiteness for a graph $G'$ of maximal degree 3, such
  that $G$ is bipartite if and only if $G'$ is bipartite.
  The graph $G'$ results from $G$ by applying the following
  replacement step, consecutively to all (original) nodes of $G$.

  A node $u$ of degree $d>1$ is replaced by a cycle
  $u_1,u'_1,u_2,\cdots,u'_d,u_1$ with $2d$ nodes.  Each node $u_i$ is
  connected to a neighbour of $u$ by an edge. It is easy to see that
  any path that connects two neighbours of $u$ and uses intermediate
  nodes of the new cycle has even length. The construction therefore
  preserves bipartiteness. Furthermore, each node in $G'$
  has degree at most 3 and the number of edges of $G'$ is at most 6
  times the number of edges of $G$. Finally, each edge insertion or
  deletion in $G$ triggers at most   5 edge insertions or deletions in
  $G'$.
\end{proofof}

The final step from work $\bigOt(m^{\frac{1}{2}})$ to $\bigOt(n^{\frac{1}{2}})$
and work $\bigO(m^{\frac{1}{2}+\epsilon})$ to
$\bigO(n^{\frac{1}{2}+\epsilon})$ again uses sparsification. In fact,
it uses the same kind of sparsification tree as the proof of
\autoref{prop:sparsification-spanning-forest}. The crucial observation
is that if a graph $G$ is not bipartite, it has a base graph in its
sparsification tree  that is not bipartite.

\begin{lemma}\label{lemma:bipartite-sparse}
  Let $G$ be an undirected graph and $\calS$ a sparsification tree for
  $G$. Then $G$ is bipartite if and only if all base graphs in $\calS$
  are bipartite.
\end{lemma}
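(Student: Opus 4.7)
The forward direction is immediate: every base graph $B_u$ is a subgraph of $G_u$, which in turn is a subgraph of $G$, and a subgraph of a bipartite graph is bipartite. So all the work lies in the converse.

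For the converse, I plan a bottom-up induction on $\calS$ that proves the stronger statement that each $G_u$ is bipartite; since $G_{\mathit{root}} = G$, this gives the conclusion. The base case, where $u$ is a leaf and $G_u$ has at most one edge, is trivial. For an inner node $u$ with children $v_1,\ldots,v_k$, I will show that the bipartite 2-colouring of $B_u$ (which exists by hypothesis) already is a valid 2-colouring of the larger graph $G_u$.

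The key step is as follows. Fix an arbitrary edge $(x,y) \in G_u$. By the structure of $\calS$, this edge belongs to $G_{v_i}$ for some child $v_i$. By the inductive hypothesis, $G_{v_i}$ is bipartite, and since $(x,y)$ is an edge of $G_{v_i}$, any bipartition of $G_{v_i}$ separates $x$ from $y$. The spanning forest $F_{v_i}$ therefore contains a unique $x$-to-$y$ path, and its length must be \emph{odd} (as it lies in the bipartite $G_{v_i}$ and joins two nodes of opposite colour). By the invariant $B_u = \bigcup_i F_{v_i}$, this path lies entirely inside $B_u$, so any bipartite 2-colouring of $B_u$ must assign $x$ and $y$ opposite colours, which is exactly what the edge $(x,y) \in G_u$ demands. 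Since $(x,y)$ was arbitrary, $G_u$ is bipartite.

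The main subtlety I anticipate is that a bipartite 2-colouring of $B_u$ is only defined up to swapping colours per connected component, so to conclude anything about a specific edge $(x,y)$ one must first know that $x$ and $y$ share a connected component in $B_u$. This is guaranteed by the defining invariant that $B_u$ and $G_u$ have the same connected components (combined with the fact that $F_{v_i}$ already witnesses a path between $x$ and $y$ within $B_u$), after which the parity-of-path-length argument closes the case cleanly.
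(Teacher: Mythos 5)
Your proof is correct, and its engine is exactly the paper's: for each edge $(x,y)$ of $G_u$, the edge lies in some child graph $G_w$, bipartiteness of $G_w$ forces the unique $F_w$-path from $x$ to $y$ to have odd length, and that path sits inside $B_u$ by the invariant $B_u=\bigcup_w F_w$. The two arguments differ only in how this observation is deployed. The paper argues contrapositively: it picks a lowest node $u$ with $G_u$ non-bipartite but all children's graphs bipartite, threads the odd forest-paths around an odd cycle of $G_u$, and obtains a closed walk of odd length in $B_u$, hence an odd cycle there. You instead run a bottom-up induction proving the stronger statement that every $G_u$ is bipartite, by checking that a proper $2$-colouring of $B_u$ is already proper for $G_u$. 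Your packaging buys a small simplification: it sidesteps the paper's last step, where one must invoke the standard fact that a closed walk of odd length contains an odd cycle (the concatenated forest paths around $C$ need not form a simple cycle, a point the paper passes over quickly); the paper's version is marginally leaner in that it only needs the hypothesis at a single node of $\calS$ rather than a full induction. Your remark on the connected-component subtlety is also handled correctly --- the odd path in $F_w\subseteq B_u$ itself witnesses that $x$ and $y$ share a component of $B_u$, so the colouring constraint is genuine.
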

\begin{proof}
  Since each base graph of $\calS$ is a subgraph of $G$, the ``only
  if'' implication is trivial.

  To show the ``if'' implication, let $G$ be a non-bipartite graph. Since
  the graph $G_r$ for the root $r$ of $\calS$ is non-bipartite, but
  all graphs $G_v$ for leaves $v$ of $\calS$ are bipartite, there must
  be a node $u$, such that $G_u$ is non-bipartite, but all graphs
  $G_w$, for children $w$ of $u$, are bipartite. We claim that the base
  graph $B_u$ is non-bipartite.

  Indeed, let $C$ be some cycle of odd length in $G_u$. By
  construction of $\calS$, each edge $(x,y)$ of $C$ occurs in some graph
  $G_w$, where $w$ is a child of $u$. Therefore, $x$ and $y$ are in
  the same connected component of $G_w$ and there must be a path
  between $x$ and $y$ in the spanning forest $F_w$. Since $G_w$ is
  bipartite and there is an edge between $x$ and $y$, the length of  this path must
  be odd. By definition, all edges of this path are in $B_u$. Since
  this holds for every edge of $C$, there exists a closed path in $B_u$, consisting of
  an odd number of paths of odd length. This implies that $B_u$ has
  a cycle of odd length and is therefore not bipartite. 
\end{proof}

Now we are prepared to give the proof of \autoref{prop: bipartite-sparsification} and thus complete
the proof of \autoref{thm:bipartite-in-n}.

\begin{proofof}{\autoref{prop: bipartite-sparsification}}
  Just like for
  \autoref{prop:sparsification-spanning-forest}, the algorithm
  maintains a sparsification tree $\calS$ for the graph $G$. For each
  node $u$ of $\calS$ it maintains whether $B_u$ is bipartite with the
  algorithm resulting from
  \autoref{prop:bipartite-sublinear-in-m-degree-three} and
  \autoref{prop: bipartite-sublinear-in-m-unbounded}. This is possible
  with work bounds $\bigOt(n^{\frac{1}{2}})$ and $\bigO(n^{\frac{1}{2}+\epsilon})$ per change operation,
  just as for  \autoref{prop:sparsification-spanning-forest}.
  
  On top of that, the algorithm maintains, for each node $u$ of
  $\calS$,  a flag, signalling whether all base graphs in the tree
  induced by $u$
  are bipartite. These flags can be maintained in a straightforward fashion with work
  $\bigO(\log n)$. The bipartiteness status of $G$ can then be
  inferred from the flag of the root of $\calS$, thanks to \autoref{lemma:bipartite-sparse}.
\end{proofof}

\section{Conclusion}\label{section:conclusion}
This paper was motivated by the goal to find  graph problems whose
sublinear sequential dynamic complexity carries over to sublinear work
of a  dynamic parallel
constant time algorithm. In future work it has to be seen whether the
faster algorithm from \cite{ChuzhoyGLNPS20} can be translated equally
well. Another challenge is to find a dynamic parallel
constant time algorithm for the reachability problem in directed
graphs. The upper work bound of the algorithm stemming from
\cite{DattaKMSZ18} is roughly $\bigO(n^{12})$.
Another interesting question is whether the algorithm for
\Bipartiteness can be adapted so that it also yields a 2-colouring of
the graph.

\bibliography{bibliography}

\appendix

\newpage
\section{Appendix on maintaining chunk arrays}
\label{subsection:chunk-arrays}

Most operations supported by \ChunkArrays are relatively straightforward to implement with the desired bounds, in principle. However, the operation \Query complicates matters, since it is not clear how it can be tested whether there is a linking edge between two sequences of chunks without additional data structures. Indeed, as in \cite{KopelowitzPorat+2018}, our dynamic algorithm uses additional trees to aggregate information about links between chunk sequences.

More precisely, for each chunk array $A$, the algorithm maintains a tree  that has the link arrays of the chunks of $A$ at its leaves, in the order of $A$. The inner vertices of the tree store link arrays  that are the bitwise disjunction of the link arrays of the leaves below them.

The maintenance of these trees is encapsulated in the data type \AggregateTrees that has the following operations.
Many of these operations are derived from operations of $\ChunkArrays$ and just implement the effect that they have on the trees for the link arrays. By $B_i$ we refer to the link arrays in the $i$-th leaf of a tree.

\begin{itemize}
    \item $\TreeInsert(S, i, B)$ inserts link array $B$ as new $i$-th leaf of $S$, moving all leaves from position $i$ by one to the right;
    \item $\TreeDelete(S, i)$ deletes the $i$-th leaf from $S$, moving all leaves from position $i+1$ to the left;
    \item $\TreeJoin(S_1, S_2)$ joins $S_1$ and $S_2$ into one tree $S$,
        that contains first all leaves of $S_1$ and then all leaves of $S_2$
        in the order as they appear in $S_1$ and $S_2$;
    \item $\TreeSplit(S, i)$ splits $S$ into two trees $S_1$ and $S_2$ and a vertex $u$,
        s.t. $S_1$ and $S_2$ contain all leaves to the left and the right of the $i$-th leaf, respectively,   and $u$ carries $B_i$;
    \item $\BitSet(S, i, j, b)$ sets bit $B_i(j)$ to $b$;
    \item $\BulkSet(S, i, B)$ replaces $B_i$  with $B$;
    \item $\DualBulkSet(S, U, j, b)$ sets bit $B_i(j)$ to $b$, for all $i \in U$;
    \item $\TreeHeight(S)$ yields the height of the tree $S$;
    \item $\TreeAnc(S, i, \ell)$ yields the $\ell$-th ancestor of the $i$-th leaf;
    \item $\BitArray(S, u)$ yields the link array of the (possibly inner) vertex $u$.
\end{itemize}

In the next subsection we show the following result about \AggregateTrees.

\begin{lemma}\label{thm:a-b-trees}
    There exists an implementation of \AggregateTrees on a constant-time CRCW PRAM that supports
    all query operations with $\bigO(1)$ work and all change operations with $\bigO(J \polylog J)$ work.
 \end{lemma}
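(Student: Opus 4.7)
The plan is to back \AggregateTrees by a $(2,6)$-tree of height $\bigO(\log J)$ whose leaves, in left-to-right order, hold the link arrays $B_1, B_2, \ldots$. In order to support both setting bits to $1$ and resetting them to $0$ without recomputing subtree information from scratch, I would store at each internal node $u$, for every bit position $j$, a counter giving the number of leaves in the subtree rooted at $u$ whose $j$-th bit is set; the OR information needed by $\BitArray$ at the resolution of whole arrays is then just the nonzero-indicator of this counter array. Parent-, child- and depth pointers are stored explicitly so that local navigation costs $\bigO(1)$ per step.

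For the query operations the work is trivially $\bigO(1)$: $\BitArray(S,u)$ and $\TreeHeight(S)$ return a stored pointer or scalar, while $\TreeAnc(S,i,\ell)$ is served from a precomputed jump-pointer table $\text{anc}[i][\ell]$ of total size $\bigO(J\log J)$, which is rebuilt whenever structural operations touch the tree. For the bit-update operations I would update the target leaf and then, in parallel over the $\bigO(\log J)$ ancestors, adjust each counter array. A $\BitSet$ changes at most one counter per ancestor and a $\BulkSet$ at most $J$; the net change is deducible from old and new leaf contents, yielding work $\bigO(\log J)$ and $\bigO(J\log J)$ respectively. For $\DualBulkSet(S,U,j,b)$ the net change at each ancestor's counter is the number of its $U$-descendants whose bit actually flips, which can be computed by a standard range-sum over the leaf indices combined with the at most $\bigO(\log J)$ ancestor levels, staying within $\bigO(J\polylog J)$ work.

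The main obstacle is the structural operations $\TreeInsert$, $\TreeDelete$, $\TreeJoin$, and $\TreeSplit$, because the textbook $(a,b)$-tree implementations propagate splits or merges sequentially along a path of logarithmic depth. My plan, following the hint in \autoref{section:connectivity}, is to process all levels simultaneously. For $\TreeInsert$ and $\TreeDelete$, note that the cascade of splits (respectively, underflows) occupies a contiguous lowest prefix of the insertion path: it consists precisely of the ancestors that currently have exactly $6$ (respectively, $2$) children. This prefix can be located in $\bigO(\log J)$ work by a parallel scan of the child counts along the path, after which every split or merge on it is performed simultaneously by a dedicated group of processors, and the counter arrays of the $\bigO(\log J)$ affected nodes are recomputed from their at-most-six children in total work $\bigO(J\log J)$. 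For $\TreeSplit(S,i)$, I would cut off in parallel the $\bigO(\log J)$ sibling-subtrees hanging to the left and to the right of the root-to-leaf-$i$ path, and then re-assemble them into the two output trees by constructing the two new boundary paths directly in parallel, again recomputing each new internal node's counter array from its constant-size set of children; $\TreeJoin$ is handled as its inverse, by matching the two input trees along their right-/left-boundary paths and filling in the missing boundary nodes in parallel. Finally, the jump-pointer table $\text{anc}$ is repaired along the affected paths within the same work budget. Together, all change operations remain in $\bigO(J\polylog J)$ work and constant parallel time, as the lemma requires.
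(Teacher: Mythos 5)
Your overall architecture matches the paper's: a $(2,6)$-tree of height $\bigO(\log J)$ with explicit ancestor pointers from leaves, aggregate information at internal nodes recomputed from a constant number of children, and parallel per-level processing of update paths. Your per-bit counters at internal nodes are a pleasant variation on the paper's OR-arrays (the paper instead handles bit resets by re-inspecting the $\bigO(1)$ siblings of each path node), and your observation that the split/merge cascade for \TreeInsert and \TreeDelete occupies exactly the maximal prefix of full (resp.\ minimal-degree) ancestors is correct and gives a valid parallelization of those two operations.

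However, there is a genuine gap at exactly the point the lemma is really about: the reassembly step of $\TreeSplit$ (and, symmetrically, the spine-splitting in $\TreeJoin$). After you cut off the sibling subtrees hanging left and right of the root-to-leaf path, you are left with $\bigO(\log J)$ fragments of varying heights, with anywhere from $0$ to $5$ fragments per level. ``Constructing the two new boundary paths directly in parallel'' does not produce a weak $(2,6)$-tree: at a level contributing no fragments the boundary node would have a single child, and gluing fragments of unequal heights onto a common spine can overflow spine nodes; resolving either situation by local splits/fusions cascades along the spine, which is precisely the sequential $\bigO(\log J)$-step process you are trying to avoid. The paper's proof devotes its main technical effort to this: it first splits, in parallel, every rightmost spine vertex of degree at least $4$ in each fragment (so each spine vertex can later absorb additional children without overflowing, and these preparatory splits provably do not cascade), then groups same-height fragment roots under fresh parents, and only then performs all $\VertexFusion$ operations simultaneously; it also needs a dedicated argument for recomputing the leaf-to-ancestor pointers after these parallel fusions, since a leaf cannot walk up $\ell$ parent pointers in constant time (this also affects your plan to ``rebuild'' the table $\text{anc}[i][\ell]$, which as stated presupposes the very ancestor information being repaired). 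Without an argument of this kind, your constant-time claim for $\TreeSplit$ and $\TreeJoin$ is unsubstantiated.
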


We are now prepared to come back to \ChunkArrays.

\begin{proofsketchof}{\autoref{lemma:chunk-arrays}}
    The dynamic algorithm maintains an aggregate tree for each chunk array.
    It maintains the invariant that the $i$-th leaf of the aggregate tree contains the link array of the $i$-th chunk of the chunk array.

    The main part of most operations, with the notable exception of \Query, can be implemented in a straightforward fashion by suitable actions in the master array and the chunk arrays.
    However, for all operations the algorithm needs to take care that the aggregate trees are up-to-date and that the link arrays of the chunks stay consistent.
    Towards the latter, it has to make sure that each change in the link array $B$ of a chunk $C$ is reflected in the link arrays of each chunk $C'$, whose bit in $B$ has changed or was newly defined.

    We describe the actions required for the link arrays and aggregate trees in the following.

    \Link- and \Unlink-operations can be simply applied by modifying the link arrays of the chunks $C_1$ and $C_2$ in the master array and
    by the \BitSet-operation in the appropriate leaves of the respective aggregate tree.

    For an operation $\BulkSetLinks(C,B)$, the link array of $C$ is set to $B$ in the master array and for each link array of a chunk $C'$ in the master array, the bit that refers to $C$ has to be set appropriately.
    The latter can be done by assigning one processor to each chunk $C'$. Both these actions need to be reflected in the aggregate trees by calls of $\BulkSet$ and $\DualBulkSet$.

    The operations  \InsertChunk, \DeleteChunk, \Concatenate and \Split do not require any changes in any link arrays, but their effect on the order of a chunk array needs to be reflected in the aggregate tree. This can easily be done by the corresponding operations of \AggregateTrees.

    The implementation of \Reorder is also straightforward on the level of chunk arrays. On the side of the aggregate trees, it has to be translated into two \Split and two \TreeJoin operations. 

    We finally explain how the  operation $\Query(T,i,j,k,\ell)$ can be implemented.

    To this end, the algorithm computes a temporary link array $B$ that is supposed to hold the bitwise disjunction of all link arrays of the interval $[i,j]$. 
    The result of the query can then be found by choosing a position $q$ in $[k,\ell]$
    such that $B(q)=1$ and by identifying a position $p$ in $[i,j]$, for which $B'(p)=1$, where $B'$ is the link array of the chunk $C'$ corresponding to the position $q$ in the tree (or the corresponding chunk array). $C$ is then the chunk of position $p$.

    To compute $B$, the aggregate tree is temporarily split\footnote{We note that this is not the most efficient way, but we chose it for simplicity.} into trees $S_1,S_2$ and $S_3$ at positions $i$ and $j$.
    The root of $S_2$ then holds exactly $B$.
    After extracting $B$, the three trees are joined back together.
    The operations $\TreeSplit$ and $\TreeJoin$ have work bounds in $\bigO(J \polylog J)$,
    so overall $B$ can be computed with $\bigO(J \polylog J)$ work.
    Choosing the positions $p$ and $q$ can be done with $\bigO(J)$ work on an arbitrary CRCW PRAM and with $\bigO(J^{1+\epsilon})$ work on a common CRCW PRAM.
    So overall, the work bound is $\bigO(J \polylog J)$ for an arbitrary CRCW PRAM and $\bigO(J^{1+\epsilon})$ for a common CRCW PRAM.
\end{proofsketchof}

\newpage
\section{Appendix on maintaining search trees}
     \label{section:search-trees}

In this section we show how to maintain \AggregateTrees, i.e.\ the correctness of \autoref{thm:a-b-trees}.
The tree structure is maintained in the style of $(a,b)$-trees which are defined as follows.

An ordered tree $T$ is a \emph{weak $(a,b)$-tree} \cite{HuddlestonMehlhorn1982} with $b \ge 2a$
if it satisfies the following conditions:
(1)~all leaves of $T$ have the same depth,
(2)~all inner vertices of $T$ have at most $b$ children,
(3)~all inner vertices of $T$ except the root have at least $a$ children and
(4)~the root of $T$ has at least $\min(2,\size{T})$ children.
The height of a vertex $v$ is the length of a shortest path to some leaf.
Thanks to (2), the height of an $(a,b)$-tree $T$, i.e.\ of its root, is in $\bigO(\log_a \size{T})$.

Our algorithm mimics a typical way of maintaining $(a,b)$-trees
in $\bigO(h(T)) = \bigO(\log_a \size{T})$ sequential time (see e.g.\ \cite{HuddlestonMehlhorn1982}),
but makes some adjustments to achieve parallel constant time and to aggregate the link arrays.

\begin{proofof}{\autoref{thm:a-b-trees}}
    Before we describe how the tree structure and the link arrays are maintained
    we first need to describe how the underlying $(a,b)$-tree $T$ is stored.
    We choose $a=2$ and $b=6$.
    For each level $\ell$ of $T$ the vertices of level $\ell$ are stored at the beginning of an array $V_\ell$ of size $\bigO(\frac{J}{2^{\ell}})$.
    The leaves are stored in $V_0$ in the order as they appear in the tree.
    The inner vertices are stored in an arbitrary order that depends on the order as they get inserted into the tree.
    If a new leaf is inserted at position $i$ all leaves to the right of $i$ are moved one position to the right.
    If a new inner vertex is inserted at level $\ell$ it gets the first free position of $V_\ell$ and
    if an inner vertex stored at position $i$ in $V_\ell$ is deleted the inner vertex with the highest position in $V_\ell$ is moved to $i$.
    The edges of $T$ are stored through pointers from each leaf to its $\bigO(\log J)$ ancestors.

    \newcommand{\first}{\textnormal{fst}}
    \newcommand{\last}{\textnormal{lst}}
    Additionally, a pointer to the root of the tree
    and for each inner vertex $v$ its height $h(v)$,
    pointers \first{} and \last{} to the first and last leaf in the subtree of $v$
    and for $2 \le i \le 6$ the $i$-th child of $v$ are maintained.
    The link arrays are also stored level-wise:
    For each level $\ell$ there is an array $L_\ell$ with $J$ slots, each of size $J$
    and each vertex has a pointer to its link array entry.
    Similar as in \autoref{subsection:chunk-arrays} we denote the link array of a vertex $v$ by $B_v$.

    The queries $\TreeHeight, \TreeAnc$ and $\BitArray$ all can be answered with $\bigO(1)$ work
    since the queried information is stored (more or less) explicitly.

    We therefore directly turn to the change operations that change the tree structure.
    The operations $\TreeInsert$ and $\TreeDelete$ are implemented using $\TreeJoin$ and $\TreeSplit$ (partially of single vertex trees).
    The operations $\TreeJoin$ and $\TreeSplit$ are implemented
    with the help of the two internal change operations $\VertexFusion$ and $\VertexSplit$.
    The operation $\VertexFusion(u,v)$ adds the children $v_1, \ldots,
    v_k $ of node $v$ to the children $u_1, \ldots, u_j$ of  $u$, 
    s.t. $u$ has children $u_1, \ldots, u_j, v_1, \ldots, v_k$. Here,
    $v$ can be a sibling of $u$ or the root of a separate aggregate
    tree. In either case, $v$ is removed afterwards.
    The operation $\VertexSplit$ splits a given vertex $u$ with children $u_1, \ldots, u_k$
    into two vertices $u$ and $u'$
    with children $u_1, \ldots, u_{\roundup{k/2}}$ and
    $u_{\roundup{k/2}+1}, \ldots, u_k$, respectively. The node $u'$
    can become a right sibling of $u$ or the root of a new and
    disjoint aggregate tree. The bit arrays need to be adapted, for
    both operations. Both operations are only allowed if they obey the
    degree constraints ``$\ge 2$'' and ``$\le 6$''.

    We first describe the implementation of these two internal operations,
    before we get back to $\TreeJoin$ and $\TreeSplit$.

    For $\VertexFusion$, one processor is assigned to each of the at most
    $J$ leaves of the subtree of $v$,
    computable in constant time with the \first{} pointer,
    to update the $h(v)$-th ancestor of each leave to $u$.
    Additionally, the child pointers and the last pointer of $u$ are updated accordingly with $\bigO(1)$ work
    and the link array of $u$ is recomputed by a bit-wise disjunction of $B_u$ and $B_v$ requiring $\bigO(J)$ work.
    If $u$ and $v$ were not siblings already, the ancestor pointers of
    all leaves in the subtree of $v$ to ancestors $w$ of $v$ are
    updated to the appropriate ancestors of $u$ with work $\bigO(J \log J)$.
    The link array of such an ancestor $w$ is recomputed by a bit-wise disjunction of $B_v$ and $B_w$.
    Overall, \VertexFusion is done with $\bigO(J)$ or $\bigO(J \log J)$ work
    depending on whether $u$ and $v$ were siblings already.

    The update on \VertexSplit can be done in a very similar fashion as for fuse
    by creating a new vertex $u'$, updating child pointers of $u$ and $u'$ accordingly
    as well as the \last{} and \first{} pointer of $u$ and $u'$
    and updating the $h(u)$-th ancestor of all leaves in the new subtree of $u'$.   
    The link arrays of $u$ and $u'$ can be (re)computed with a bitwise disjunction of the link arrays of their at most $3$ children.
    If $u'$ should become the root of a new tree, the ancestor pointers of the leaves in the new subtree of $u'$ to ancestors of $u$ are removed
    and the link arrays of each ancestor $w$ of $u$ is recomputed by a bitwise disjunction of $B_u$ and all siblings of all vertices between $u$ and $w$.
    Overall, \VertexSplit is done with $\bigO(J)$ or $\bigO(J \polylog J)$ work
    depending of whether $u'$ should remain a sibling of $u$ or become the root of a new tree.

    Now, we turn to joining two given trees $T_1$ and $T_2$ into one tree $T$
    and assume that $h(T_1) \ge h(T_2)$ since the other case is symmetrical.
    Let $v$ be the right most vertex in $T_1$ with $h(v) = h(T_2)$.
    The idea is to insert (the root of) $T_2$ as a new right sibling of $v$.
    To this end, (1)~$v$ and the lowest ancestor $u$ of $v$ with a parent of lower degree than $b$ are identified,
    (2)~the at most $\bigO(\log J)$ vertices between $u$ and $v$ are split
    and (3)~the root of $T_2$ is fused into $v$. If no such $u$
    exists, we let $u$ be a new node, becoming the  parent of the
    previous  root and a new sibling thereof. 
    Regarding (1), $v$ can be computed with $\bigO(1)$ work using the maintained pointers and heights of the two roots
    and $u$ can be computed with $\bigO(\log^2 J)$ processors.
    Regarding (2), all vertices between $u$ and $v$ can be split in parallel which overall requires $\bigO(J \log J)$ work
    because the split vertices remain siblings.
    Regarding (3), \VertexFusion requires $\bigO(J \log J)$ work as described above since the root of $T_2$ were not a sibling of $v$ before the change.

    Now, we turn to splitting a tree $T$ into $T_1$ and $T_2$ at a leaf $u$
    and only describe the construction of $T_1$ since the construction of $T_2$ is symmetrical.
    The idea is to split $T$ along the upwards path from $u$ into $\bigO(\log J)$ many smaller trees
    and to join the trees to the left of $u$ into one $(2,6)$-tree.
    The difficulty here is to do this join of  $\bigO(\log J)$ many
    trees in parallel constant time.
    To simplify matters, each right most vertex of degree at least $4$
    in each tree is split beforehand,
    such that all the right most vertices have a degree between $2$ and $3$ and
    therefore in the parallel tree joins no vertex splits are
    necessary. We note that these splits do not cascade since, if a
    node whose children are split has too many siblings, there is a
    split of these siblings as well. It might, however, result in a
    new root.

    In more detail,
    (1) the set of temporary roots $R$ of the smaller trees along the upwards path from $u$ are determined,
    (2) all right most vertices of degree at least $4$ in the subtree
    of each vertex in $R$ are split and the affected bit arrays are recomputed,
    (3) all roots in $R$ of the same height $h$ are joined (without updating the ancestor pointers, yet) to a new root of height $h+1$, if there are at least $2$ of height $h$, and
    (4) the subtrees of the at most $\bigO(\log J)$ remaining roots in $R$ are joined into one another, thereby fixing the ancestor pointers of all the subtrees.
    Regarding (1), the roots are exactly the ones that have an ancestor of $u$ as a right sibling.
    So, finding them requires only constant work for each root and $\bigO(\log J)$ overall.
    Regarding (2), the at most $\bigO((\log J)^2)$ overall splits and
    the bit array computations can be done with $\bigO(J (\log J)^2)$ work.
    Regarding (3), adding a new root including its link array can be done with $\bigO(J)$ work 
    and since there are $\bigO(\log J)$ levels, step (3) can be done with $\bigO(J \log J)$ work overall.
    Regarding (4), the critical part is fixing the ancestor pointers after the $\bigO(\log J)$ parallel vertex fusions.
    To this end, the highest root $r_i$ in $R$ with $h(r_i) < i$ is determined for each level $i$.
    The new $i$-th ancestor of a leaf $v$ is then determined using the ancestor pointers of the vertex $w$ which $r_i$ is fused into.
    So overall, splitting a tree $T$ can be done with $\bigO(J \polylog J)$ work.

    Next, we turn to the operations that change the link arrays directly.
    The change of a $\BitSet(S,i,j,b)$ operation is first executed properly at the leaf $u_i$ and
    then propagated along the path to the root of $S$:
    For an ancestor $v$ of $u_i$ the $j$-th entry of $B_v$ is set to $1$ if $b=1$
    and is otherwise determined by inspecting all siblings of nodes on
    the path from $u_i$ to $v$, with work  $\bigO(\log J)$ .
    The operation $\BulkSet(S,i,B)$ is implemented using $J$ parallel
    calls to $\BitSet$, once for each bit of $B$, resulting in work $\bigO(J \log J)$.
    The operation $\DualBulkSet(S,U,j,b)$ with $b=1$ is again implemented using $J$ calls to $\BitSet$, this time once for each $i \in U$
    which can be done concurrently in parallel since only ones are written.
    For $b=0$, first the set $U'$ of all leaves $u \notin U$ with $B_u(j)=1$ is computed.
    Then, $B_{v}(i)$ is set to $0$ for all vertices $v$,  using $J$ parallel
    calls to $\BitSet$, and finally, $\DualBulkSet(S,U',j,1)$ is executed.
    Overall, the work is $\bigO(J \log J)$. 
\end{proofof}

\end{document}